%
\documentclass{llncs}
\usepackage[a4paper, margin=25.0mm]{geometry}
\usepackage{graphicx}

\usepackage[english]{babel}
\usepackage{amsfonts}
\usepackage{inputenc}
\usepackage{color}
\usepackage{algpseudocode}
\usepackage{booktabs,multicol,multirow}
\usepackage{fancyhdr}
\usepackage{epic}
\usepackage{xspace}
\usepackage{amssymb,amsmath,verbatim}
\usepackage{multirow}
\usepackage{multicol}
\usepackage{adjustbox}
\usepackage{threeparttable}
\usepackage{float}

\usepackage{amsthm}
\interdisplaylinepenalty=2500
\usepackage{chngcntr}

\usepackage[linesnumbered,ruled,vlined]{algorithm2e}
\SetKwInput{KwInput}{Input}                
\SetKwInput{KwOutput}{Output}              

\newtheorem{thm}{Theorem}

\newtheorem{lem}[thm]{Lemma}
\newtheorem{rem}[thm]{Remark}

\newtheorem{defn}[thm]{Definition}


\usepackage{url}

\def\ZZ{\mathbb{Z}}
\def\QQ{\mathbb{Q}}
\def\RR{\mathbb{R}}

\def\cal{\mathcal}
\def\bf{\mathbf}

\def\TrapGen{\mathsf{TrapGen}}
\def\PK{\mathsf{PK}}
\def\SK{\mathsf{SK}}
\def\CT{\mathsf{CT}}
\def\SampleLeft{\mathsf{SampleLeft}}
\def\SampleRight{\mathsf{SampleRight}}
\def\SamplePre{\mathsf{SamplePre}}
\def\SampleBasisLeft{\mathsf{SampleBasisLeft}}

\def\SampleP{\mathsf{SampleP}}

\def\td{\mathsf{td}}

\def\Setup{\mathsf{Setup}}

\def\Enc{\mathsf{Enc}}
\def\Dec{\mathsf{Dec}}
\def\Td{\mathsf{Td}}
\def\Test{\mathsf{Test}}
\def\Pr{\mathrm{Pr}}
\def\Adv{\mathsf{Adv}}
\def\OW{\textsf{OW-CCA2}}
\def\IND{\textsf{IND-CCA2}}
\def\OWa{\textsf{OW-CCA}}
\def\INDa{\textsf{IND-CCA}}

\def\a{\alpha}
\def\u{\bf{u}}

\def\t{\theta}
\def\e{\bf{e}}

\def\x{\bf{x}}

\def\L{\Lambda}
\def\Lp{\Lambda^{\perp}}
\def\b{\bf{b}}
\def\s{\bf{s}}
\def\c{\bf{c}}

\def\RLWE{\textbf{Ring-LWE}_{n,q,D_{R,\sigma}}}

%

\begin{document}
	\title{Lattice-based public key encryption with equality test supporting flexible authorization in standard model}
	\titlerunning{}
	\authorrunning{}
	\author{Dung Hoang Duong\inst{1}\and Kazuhide Fukushima\inst{2}\and Shinsaku Kiyomoto\inst{2}\and\\ Partha Sarathi Roy\inst{1} \and Arnaud Sipasseuth\inst{2} \and Willy Susilo\inst{1}}
	\institute{Institute of Cybersecurity and Cryptology\\
	School of Computing and Information Technology, University of Wollongong\\
		Northfields Avenue, Wollongong NSW 2522, Australia\\
		\email{\{hduong,partha,wsusilo\}@uow.edu.au} 
		\and
		Information Security Laboratory, KDDI Research, Inc.\\
		2-1-15 Ohara, Fujimino-shi, Saitama, 356-8502, Japan\\
		\email{\{ka-fukushima,kiyomoto,ar-sipasseuth\}@kddi-research.jp}
	}

	\maketitle              
	
	\begin{abstract}
Public key encryption with equality test (PKEET) supports to check whether two ciphertexts encrypted under different public keys contain the same message or not. PKEET has many interesting applications such as keyword search on encrypted data, encrypted data partitioning for efficient encrypted data management, personal health record systems, spam filtering in encrypted email systems and so on. However, the PKEET scheme lacks an authorization mechanism for a user to control the comparison of its ciphertexts with others. In 2015, Ma et al. introduce the notion of PKEET with flexible authorization (PKEET-FA) which strengthens privacy protection. Since 2015, there are several follow-up works on PKEET-FA. But, all are secure in the random-oracle model. Moreover, all are vulnerable to quantum attacks. In this paper, we provide three constructions of quantum-safe PKEET-FA secure in the standard model. Proposed constructions are secure based on the hardness assumptions of integer lattices and ideal lattices. Finally, we implement the PKEET-FA scheme over ideal lattices.
		
	\end{abstract}

\section{Introduction}
Public key encryption with equality test (PKEET), which supports the equality test of underlying messages of two ciphertexts with the help of trapdoor, was first introduced by Yang et al.~\cite{Yang10}. The property of equality test is of use for various practical applications, such as keyword search on encrypted data, encrypted data partitioning for efficient encrypted data management, personal health record systems, spam filtering in encrypted email systems and so on. But, to protect the privacy of the data owner, it is required to have an authorization mechanism. In this direction, the first successful effort was by Tang \cite{Tang11}, where a fine-grained authorization policy enforcement mechanism was integrated into PKEET. This work was further extended by using two collaborating proxies to perform the equality tests \cite{Tang12}. Moreover, an all-or-nothing PKEET was presented by Tang \cite{Tang12b}. But, all the above schemes are one-way against chosen-ciphertext attack (OW-CCA). However, for some special scenarios, such as database applications, OW-CCA security may not be strong enough. Motivated by this, the security models of PKEET were revisited by Lu et al. \cite{LZL12}, who proposed several new and stronger security definitions. A new notion, called public key encryption with delegated equality test (PKE-DET), was introduced by Ma et al. \cite{Ma15}, where the delegated party is only allowed to deal with the work in a multi-user setting. Ma et al. \cite{Ma15b} proposed a new primitive, called public key encryption with equality test supporting flexible authorization (PKEET-FA), to strengthen privacy protection. Lin et al. \cite{LQZ16}, remove the pairing computation from the construction of Ma et al. \cite{Ma15b}. However, they are all proven to be secure in the random oracle model based on the number-theoretic hardness assumption. It is a need of the age to construct secure primitive based on post-quantum secure hardness assumption. Therefore it is necessary to construct such a scheme in the standard model based on post-quantum secure hardness assumption. Recently, Lee et al. \cite{Lee2016} proposed a generic construction of PKEET in the standard model from which it is possible to mount a lattice-based construction. Duong et al. \cite{Duong19} proposed a concrete construction of PKEET secure in the standard model based on the hardness of LWE problem. Both of these two works do not consider the authorization mechanism.

\noindent\textbf{Our contribution:}
In this paper, our contribution is fourfold:
\begin{itemize}
	\item According to the best of our knowledge, we propose the first concrete construction of a PKEET-FA scheme based on the hardness assumption of integer lattices.
	\item We describe a lattice-based instantiation of PKEET-FA from Lee et al.'s \cite{Lee2016} generic construction.
	\item To have better efficiency in respect of the size of parameters, we propose a construction of PKEET-FA scheme based on the hardness assumption of ideal lattices.
	\item Finally, we implement the PKEET-FA scheme over ideal lattices.
\end{itemize}
We first employ the multi-bit full IBE by Agrawal et al.~\cite{ABB10-EuroCrypt} and then directly transform it into a PKEET scheme. Finally, we have devoleped the authorization algorithms to construct PKEET-FA over integer lattices. In our scheme, a ciphertext is of the form $\CT=(\CT_1,\CT_2,\CT_3,\CT_4)$ where $(\CT_1,\CT_3)$ is the encryption of the message $\bf{m}$, as in the original IBE scheme, and $(\CT_2,\CT_4)$ is the encryption of $H(\bf{m})$ in which $H$ is a hash function. In order to utilize the IBE scheme, we employ a second hash function $H'$ and create the \textit{identity} $H'(\CT_1,\CT_2)$ before computing $\CT_3$ and $\CT_4$. 
To instantiate PKEET-FA from \cite{Lee2016}, we have employed the HIBE from \cite{ABB10-EuroCrypt}. Finally, we have used the ideal version of IBE of \cite{ABB10-EuroCrypt} from \cite{BertFRS18-implement} to construct PKEET-FA over ideal lattices.
As compared to the previous constructions, the proposed constructions are computationally efficient due to the absence of exponentiation and evaluation of pairing; see Table \ref{tab1}. Moreover, all the previous constructions are secure in the random oracle model, whereas, all the proposed constructions are secure in the standard model.

{\renewcommand{\arraystretch}{1.3}
	\setlength{\tabcolsep}{1.5pt}
	\begin{table*}[ht]
		\begin{center}
			\caption{Efficency Comparison with Existing PKEET.}
			\begin{tabular}{c c c c c c c}
				\toprule
				Scheme & $\cal{C}_{\Enc}$ & $\cal{C}_{\Dec}$ & $\cal{C}_{\Td}$ & $\cal{C}_{\Test}$ & Std./RO & Asmp\\
				\midrule
				\cite{Yang10} & 3 Exp &  3 Exp & - & 2 Pairing & RO & CDH\\
				\cite{Tang11,Tang12} & 4 Exp &  2 Exp & 3 Exp & 4 Pairing & RO & CDH + DDH\\
				\cite{Tang12b} & 5 Exp & 2 Exp & 0 & 4 Exp & RO & CDH\\
				\cite{Ma15}-Type 1 & 6 Exp &  5 Exp & 0 & 2 Pairing+ 2 Exp & RO & CDH \\
				\cite{Ma15}-Type 2 & 6 Exp &  5 Exp & 2 Exp & 2 Pairing+ 2 Exp & RO & CDH \\
				\cite{Ma15}-Type 3 & 6 Exp &  5 Exp & 1 Exp & 2 Pairing+ 2 Exp & RO & CDH \\
				\cite{LQZ16}-Type 1 & 4 Exp & 3 Exp & 0 & 2 Exp & RO & DDH\\
				\cite{LQZ16}-Type 2 & 4 Exp & 3 Exp & 1 Exp & 0 & RO & DDH\\
				\cite{LQZ16}-Type 3 & 4 Exp & 3 Exp & 1 Exp & 1 Exp & RO & DDH\\
				Proposed- Type 1,2, 3 & 0 &  0 & 0 & 0 & Std. & LWE\\
                (Over Integer Lattices) &  &  &  &  &  & \\	
                Proposed- Type 1 & 0 &  0 & 0 & 0 & Std. & LWE\\
                (Instantiation of  &  &  &  &  &  & \\
                Lee et al.'s \cite{Lee2016}construction) &  &  &  &  &  & \\
                Proposed- Type 1,2, 3 & 0 &  0 & 0 & 0 & Std. & RLWE\\
                (Over Ideal Lattices) &  &  &  &  &  & \\			
				\toprule
				
				\noalign{\smallskip}
				
				\multicolumn{7}{l}{${}^{\phantom{**}}$ \parbox[t]{0.99\textwidth}{$\cal{C}_{\Enc}$, $\cal{C}_{\Dec}$, $\cal{C}_{\Td}$, $\cal{C}_{\Test}$: the computation complexity of algorithms for encryption, decryption, $\Td$ and $\Test$; Asmp: assumption; Exp: exponentiation; pairing: pairing evaluation; Std./RO: security in standard model or random oracle model; CDH: computational Diffie-Hellman assumption; DDH: decesional Diffie-Hellman assumption; LWE: learning with error problem; RLWE: Ring LWE.}}

			\end{tabular}
			\label{tab1}
		\end{center}
		
	\end{table*}
} 

\section{Preliminaries}
\subsection{Public key encryption with equality test supporting flexible authorization (PKEET-FA)}
In this section, we will recall the model of PKEET-FA and its security model.

We remark that a PKEET-FA system is a multi-user setting. Hence we assume that in our system throughout the paper, each user is assigned with an index $i$ with $1\leq i\leq N$ where $N$ is the number of users in the system.

\begin{defn}[PKEET-FA] \label{def:PKEET}
	PKEET-FA consists of the following polynomial-time algorithms:
	\begin{itemize}
		\item $\Setup(\lambda)$: On input a security parameter $\lambda$ and set of parameters, it outputs the a pair of a user's public key $\PK$ and secret key $\SK$.
		\item $\Enc(\PK,\bf{m})$: On input the public key $\PK$ and a message $\bf{m}$, it outputs a ciphertext $\CT$.
		\item $\Dec(\SK,\CT)$: On input the secret key $\SK$ and a ciphertext $\CT$, it outputs a message $\bf{m}'$ or $\perp$.
	\end{itemize}
	Suppose that the receiver $U_i$ (resp. $U_j$) has a public/secret key pair $(\PK_i,\SK_i)$ (resp. $(\PK_j,\SK_j)$), whose ciphertext is $\CT_i$ (resp. $\CT_j$). To realize Type-$\a$ $(\a = 1, 2, 3)$ authorization for $U_i$ and $U_j$ , algorithm $\Td_{\a}$ $(\a = 1, 2, 3)$ is defined to generate trapdoor for $U_i$'s ciphertext (or ciphertexts) that is (are) required to be compared with $U_j$'s ciphertext (or ciphertexts), and $\Test_{\a}$ $(\a = 1, 2, 3)$ algorithm to determine whether two receivers' ciphertexts contain the same message or not.\\
	Type-1 Authorization:
	\begin{itemize}
		\item $\Td_1(\SK_i)$: On input the secret key $\SK_i$ of the user $U_i$, it outputs a trapdoor $\td_{1,i}$ for $U_i$.
		\item $\Test(\td_{1,i},\td_{1,j},\CT_i,\CT_j)$: On input two trapdoors $\td_{1,i}, \td_{1,j}$ and two ciphertexts $\CT_i, \CT_j$ for users $U_i$ and $U_j$ respectively, it outputs $1$ or $0$.
	\end{itemize}
	Type-2 Authorization:
	\begin{itemize}
		\item $\Td_2(\SK_i, \CT_i)$: On input the secret key $\SK_i$ and ciphertext $\CT_i$ of the user $U_i$, it outputs a trapdoor $\td_{2,i}$ for $(U_i, \CT_i)$.
		\item $\Test(\td_{2,i},\td_{2,j},\CT_i,\CT_j)$: On input two trapdoors $\td_{2,i}, \td_{2,j}$ and two ciphertexts $\CT_i, \CT_j$ for users $U_i$ and $U_j$ respectively, it outputs $1$ or $0$.
	\end{itemize}
	\if0
	Type-3 Authorization:
	\begin{itemize}
		\item $\Td_3(\SK_i, \CT_i, \PK_j, \CT_j)$: On input the secret key $\SK_i$, ciphertext $\CT_i$ of the user $U_i$ and public key $\PK_j$, ciphertext $\CT_j$ of the user $U_j$, it outputs a trapdoor $\td_{3,i}$ for $(U_i, \CT_i, U_j, \CT_j)$.
		\item $\Test(\td_{3,i},\td_{3,j},\CT_i,\CT_j)$: On input two trapdoors $\td_{3,i}, \td_{3,j}$ and two ciphertexts $\CT_i, \CT_j$ for users $U_i$ and $U_j$ respectively, it outputs $1$ or $0$.
	\end{itemize}
	\fi
	Type-3 Authorization:
	\begin{itemize}
		\item $\Td_{3,i}(\SK_i, \CT_i)$: On input the secret key $\SK_i$ and ciphertext $\CT_i$ of the user $U_i$, it outputs a trapdoor $\td_{3, i}$ for $(U_i, \CT_i)$.
		\item $\Td_{3, j}(\SK_j)$: On input the secret key $\SK_j$ of the user $U_j$, it outputs a trapdoor $\td_{3, j}$ for $U_j$.
		\item $\Test(\td_{3,i},\td_{3, j},\CT_i,\CT_j)$: On input two trapdoors $\td_{3, i}, \td_{3, j}$ and two ciphertexts $\CT_i, \CT_j$ for users $U_i$ and $U_j$ respectively, it outputs $1$ or $0$.
	\end{itemize}
\end{defn}
\begin{rem}
	Type-3 authorization is a combination of Type-1 authorization and Type-2 authorization, which is for comparing a single ciphertext of $U_i$ with all ciphertexts of $U_j$.
\end{rem}

\noindent\textbf{Correctness.} We say that a PKEET scheme is \textit{correct} if the following three condition hold:
\begin{description}
	\item[(1)] For any security parameter $\lambda$, any user $U_i$ and any message $\bf{m}$, it holds that
	$$\Pr\left[ {\begin{gathered}
		\Dec(\SK_i,\CT_i)=\bf{m}\end{gathered}  
		\left| \begin{gathered}
		(\PK_i,\SK_i)\gets\Setup(\lambda)\\
		\CT_i\gets\Enc(\PK_i,\bf{m})
		\end{gathered}  \right.} \right]=1.$$
	\item[(2)] For any security parameter $\lambda$, any users $U_i$, $U_j$ and any messages $\bf{m}_i, \bf{m}_j$, it holds that:    
	$$\Pr\left[{
		\Test\left( \begin{gathered}
		\td_{\a, i} \\
		\td_{\a, j} \\
		\CT_i \\
		\CT_j \\ 
		\end{gathered}  \right) = 1\left| \begin{array}{l}
		(\PK_i,\SK_i)\gets\Setup(\lambda) \\
		\CT_i\gets\Enc(\PK_i,\bf{m}_i) \\
		\td_{\a, i}\gets\Td_\a() \\
		(\PK_j,\SK_j)\gets\Setup(\lambda) \\
		\CT_j\gets\Enc(\PK_j,\bf{m}_j) \\
		\td_{\a, j}\gets\Td_{\a}() 
		\end{array}  \right.} \right]=1$$
	if $\bf{m}_i=\bf{m}_j$ regardless of whether $i=j$, where $\a = 1, 2, 3$.
	
	\item[(3)] For any security parameter $\lambda$, any users $U_i$, $U_j$ and any messages $\bf{m}_i, \bf{m}_j$, it holds that
	$$\Pr\left[{
		\Test\left( \begin{gathered}
		\td_{\a, i} \\
		\td_{\a, j} \\
		\CT_i \\
		\CT_j \\ 
		\end{gathered}  \right) = 1\left| \begin{array}{l}
		(\PK_i,\SK_i)\gets\Setup(\lambda) \\
		\CT_i\gets\Enc(\PK_i,\bf{m}_i) \\
		\td_{\a, i}\gets\Td_{A} \\
		(\PK_j,\SK_j)\gets\Setup(\lambda) \\
		\CT_j\gets\Enc(\PK_j,\bf{m}_j) \\
		\td_{\a, j}\gets\Td_{\a}() 
		\end{array}  \right.} \right]$$
	is negligible in $\lambda$ for any ciphertexts $\CT_i$, $\CT_j$ such that $\Dec(\SK_i,\CT_i)\ne\Dec(\SK_j,\CT_j)$ regardless of whether $i=j$, where $\a = 1, 2, 3$.
\end{description}

\noindent\textbf{Security model of PKEET.} Because the trapdoor algorithm for Type-3 authorization can be obtained by the combination of those for Type-1 and Type-2 authorizations, we omit Type-3 authorization queries and only provide Type-$\a$ $(\a = 1,2)$ authorization queries to the adversary in the security games for simplicity. For the security model of PKEET-FA, we consider two types of adversaries:
\begin{itemize}
	\item[$\bullet$] Type-I adversary: for this type, the adversary can request to issue a trapdoor for authorization of the target user and thus can perform authorization (equality) tests on the challenge ciphertext. The aim of this type of adversaries is to reveal the message in the challenge ciphertext.
	\item[$\bullet$] Type-II adversary: for this type, the adversary cannot request to issue a trapdoor for authorization of the target user and thus cannot perform equality tests on the challenge ciphertext. The aim of this type of adversaries is to distinguish which message is in the challenge ciphertext between two candidates.
\end{itemize}
The security model of a PKEET scheme against two types of adversaries above is described in the following.\\

\noindent\textbf{OW-CCA security against Type-I adversaries.} We illustrate the game between a challenger $\cal{C}$ and a Type-I adversary $\cal{A}$ who can have a trapdoor for all ciphertexts of the target user, say $U_{\theta}$, that he wants to attack, as follows:
\begin{enumerate}
	\item \textbf{Setup:} The challenger $\cal{C}$ runs $\Setup(\lambda)$ to generate the key pairs $(\PK_i,\SK_i)$ for all users with $i=1,\cdots,N$, and gives $\{\PK_i\}_{i=1}^N$ to $\cal{A}$.
	\item \textbf{Phase 1:}  The adversary $\cal{A}$ may make queries polynomially many times adaptively and in any order to the following oracles:
	\begin{itemize}
		\item $\cal{O}^{\SK}$: an oracle that on input an index $i$ (different from $\theta$), returns the $U_i$'s secret key $\SK_i$.
		\item $\cal{O}^\Dec$: an oracle that on input a pair of an index $i$ and a ciphertext $\CT_i$, returns the output of $\Dec(\SK_i,\CT_i)$ using the secret key of the user $U_i$.
		\item $\cal{O}^{\Td_\a}$ for $\a = 1, 2$: 
		\begin{itemize}
			\item on input $i$, $\cal{O}^{\Td_1}$ returns $\td_i$ by running $\td_{1,i}\gets\Td(\SK_i)$.
			\item On input $(i, \CT_i)$, $\cal{O}^{\Td_2}$ returns by running $\td_{2,i} \gets \Td_2(\SK_i, \CT_i)$.
		\end{itemize}
	\end{itemize}
	\item \textbf{Challenge:} $\cal{C}$ chooses a random message $\bf{m}$ in the message space and run $\CT_{\theta}^*\gets\Enc(\PK_{\theta},\bf{m})$, and sends $\CT_{\theta}^*$ to $\cal{A}$.
	\item \textbf{Guess:} $\cal{A}$ output $\bf{m}'$.
\end{enumerate}
The adversary $\cal{A}$ wins the above game if $\bf{m}=\bf{m}'$ and the success probability of $\cal{A}$ is defined as
$$\Adv^{\OWa, Type-\a}_{\cal{A},\text{PKEET}}(\lambda):=\Pr[\bf{m}=\bf{m}'].$$
\begin{rem}
	If the message space is polynomial in the security parameter or the min-entropy of the message distribution is much lower than the security parameter then a Type-I adversary $\cal{A}$ with a trapdoor for the challenge ciphertext can reveal the message in polynomial-time or small exponential time in the security parameter, by performing the equality tests with the challenge ciphertext and all other ciphertexts of all messages generated by himself. Hence to prevent this attack, we assume that the size of the message space $\cal{M}$ is exponential in the security parameter and the min-entropy of the message distribution is sufficiently higher than the security parameter.
\end{rem}

\noindent\textbf{IND-CCA security against Type-II adversaries.} We present the game between a challenger $\cal{C}$ and a Type-II adversary $\cal{A}$ who cannot have a trapdoor for all ciphertexts of the target user $U_{\theta}$ as follows:
\begin{enumerate}
	\item \textbf{Setup:} The challenger $\cal{C}$ runs $\Setup(\lambda)$ to generate the key pairs $(\PK_i,\SK_i)$ for all users with $i=1,\cdots,N$, and gives $\{\PK_i\}_{i=1}^N$ to $\cal{A}$.
	\item \textbf{Phase 1:}  The adversary $\cal{A}$ may make queries polynomially many times adaptively and in any order to the following oracles:
	\begin{itemize}
		\item $\cal{O}^{\SK}$: an oracle that on input an index $i$ (different from $\theta$), returns the $U_i$'s secret key $\SK_i$.
		\item $\cal{O}^\Dec$: an oracle that on input a pair of an index $i$ and a ciphertext $\CT_i$, returns the output of $\Dec(\SK_i,\CT_i)$ using the secret key of the user $U_i$.
		\item $\cal{O}^{\Td_\a}$ for $\a = 1, 2$: 
		\begin{itemize}
			\item on input $i (\neq \theta)$, $\cal{O}^{\Td_1}$ returns $\td_i$ by running $\td_{1,i}\gets\Td(\SK_i)$.
			\item On input $(i (\neq \theta), \CT_i)$, $\cal{O}^{\Td_2}$ returns by running $\td_{2,i} \gets \Td_2(\SK_i, \CT_i)$.
		\end{itemize}
	\end{itemize}
	\item \textbf{Challenge:} $\cal{A}$ chooses message $\bf{m}$ and pass to $\cal{C}$, who then selects a random bit $r\in\{0,1\}$. If $r= 0$, runs $\CT^*_{\theta, r}\gets\Enc(\PK_{\theta},\bf{m})$; otherwise chose a random ciphertext as $\CT^*_{\theta, r}$  and sends $\CT^*_{\theta,r}$ to $\cal{A}$.
	\item \textbf{Guess:} $\cal{A}$ output $r'$.
\end{enumerate}
The adversary $\cal{A}$ wins the above game if $r=r'$ and the advantage of $\cal{A}$ is defined as
$$\Adv_{\cal{A},\text{PKEET}}^{\INDa, Type-\a}:=\left|\Pr[r=r']-\frac{1}{2}\right|.$$


\subsection{Lattices}
Lattices are discrete subgroups of $\ZZ^m$. Specially, a lattice $\Lambda$ in $\ZZ^m$ with basis $B=[\b_1,\cdots,\b_n]\in\ZZ^{m\times n}$, where each $\b_i$ is written in column form, is defined as
$$\Lambda:=\left\{\sum_{i=1}^n\b_ix_i | x_i\in\ZZ~\forall i=1,\cdots,n \right\}\subseteq\ZZ^m.$$
We call $n$ the rank of $\L$ and if $n=m$ we say that $\L$ is a full rank lattice. In this paper, we mainly consider full rank lattices containing $q\ZZ^m$, called $q$-ary lattices, defined as the following, for a given matrix $A\in\ZZ^{n\times m}$ and $\bf{u}\in\ZZ_q^n$
\begin{align*}
\L_q(A) &:= \left\{ \e\in\ZZ^m ~\rm{s.t.}~ \exists \bf{s}\in\ZZ_q^n~\rm{where}~A^T\bf{s}=\bf{e}\mod q \right\}\\
\Lp_q(A) &:= \left\{ \e\in\ZZ^m~\rm{s.t.}~A\e=0\mod q \right\} \\
\L_q^{\bf{u}}(A) &:=  \left\{ \e\in\ZZ^m~\rm{s.t.}~A\e=\bf{u}\mod q \right\}
\end{align*}
Note that if $\bf{t}\in\L_q^{\bf{u}}(A)$ then $\L_q^{\bf{u}}(A)=\Lp_q(A)+\bf{t}$.

Let $S=\{\s_1,\cdots,\s_k\}$ be a set of vectors in $\mathbb{R}^m$. We denote by $\|S\|:=\max_i\|\s_i\|$ for $i=1,\cdots,k$, the maximum $l_2$ length of the vectors in $S$. We also denote $\tilde{S}:=\{\tilde{\s}_1,\cdots,\tilde{\s}_k \}$ the Gram-Schmidt orthogonalization of the vectors $\s_1,\cdots,\s_k$ in that order. We refer to $\|\tilde{S}\|$ the Gram-Schmidt norm of $S$. 

Ajtai~\cite{Ajtai99} first proposed how to sample a uniform matrix $A\in\ZZ_q^{n\times m}$ with an associated basis $S_A$ of $\Lp_q(A)$ with low Gram-Schmidt norm. It is improved later by Alwen and Peikert~\cite{AP09} in the following Theorem.

\begin{thm}\label{thm:TrapGen}
	Let $q\geq 3$ be odd and $m:=\lceil 6n\log q\rceil$. There is a probabilistic polynomial-time algorithm $\TrapGen(q,n)$ that outputs a pair $(A\in\ZZ_q^{n\times m},S\in\ZZ^{m\times m})$ such that $A$ is statistically close to a uniform matrix in $\ZZ_q^{n\times m}$ and $S$ is a basis for $\Lp_q(A)$ satisfying
	\[\|\tilde{S}\|\leq O(\sqrt{n\log q})\quad\text{and}\quad\|S\|\leq O(n\log q)\]
	with all but negligible probability in $n$.
\end{thm}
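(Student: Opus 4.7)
The plan is to follow the Alwen--Peikert block construction, which refines Ajtai's original procedure. At a high level, I would sample a uniform ``wide'' block $A_1 \in \ZZ_q^{n \times m_1}$ with $m_1$ just above the leftover-hash-lemma threshold, and then complete it to an $n \times m$ matrix $A = [\,A_1 \mid A_2\,]$, where $A_2$ is constructed from $A_1$ in a way that (i) keeps the full matrix $A$ statistically close to uniform over $\ZZ_q^{n \times m}$ and (ii) exposes a short basis of $\Lp_q(A)$ essentially for free.

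First I would fix $m_1 \approx 2 n \log q$ and sample $A_1 \gets \ZZ_q^{n \times m_1}$ uniformly. A standard counting argument (a union bound over singular $n \times n$ submatrices, reduced modulo each prime divisor of $q$) shows that, except with probability $2^{-\Omega(n)}$, the columns of $A_1$ generate $\ZZ_q^n$ as a $\ZZ_q$-module; in that event Hermite normal form over $\ZZ$ supplies an explicit (possibly long) basis $T$ of $\Lp_q(A_1)$. Next I would sample a short random matrix $R \in \{-1,0,1\}^{m_1 \times m_2}$ with $m_2 := m - m_1$, and set $A_2 := -A_1 R \bmod q$. The $m_2$ columns of $\binom{R}{I_{m_2}} \in \ZZ^{m \times m_2}$ are then short vectors in $\Lp_q(A)$, since
\[
A \binom{R}{I_{m_2}} = A_1 R + A_2 \equiv 0 \pmod q,
\]
and combining them with lifts of the vectors of $T$ via an \ExtendBasis-style procedure yields a full basis $S$ of $\Lp_q(A)$.

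For the statistical-closeness claim I would condition on $A_1$ being surjective onto $\ZZ_q^n$ and apply the leftover hash lemma to the hash family $\{\mathbf{r} \mapsto A_1 \mathbf{r}\}$: each column of $-A_1 R$ is then within $2^{-\Omega(n)}$ of uniform on $\ZZ_q^n$ provided $m_1 \gtrsim 2 n \log q$, and a union bound over the $m_2$ columns gives overall negligible statistical distance between $A$ and a uniform sample from $\ZZ_q^{n \times m}$. The parameter choice $m = \lceil 6 n \log q \rceil$ is calibrated to absorb the constants from both the leftover-hash-lemma threshold and the surjectivity bound simultaneously.

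The main obstacle, in my view, is the Gram--Schmidt analysis: the naive concatenation of $T$ and $\binom{R}{I_{m_2}}$ has columns that can be rather long, so one must apply Gram--Schmidt in a carefully chosen order and exploit the block structure to argue that the orthogonalized vectors are controlled by $\|R\| = O(\sqrt{m_1})$ (via a Chernoff bound on the $\{-1,0,1\}$-entries of $R$) together with the short vectors contributed by the HNF block. Making the constants sharp enough so that $m = \lceil 6 n \log q \rceil$ suffices---rather than some larger $\Omega(n \log q)$---is precisely the technical heart of the Alwen--Peikert refinement over Ajtai's original argument, and is where I would expect to spend most of the effort.
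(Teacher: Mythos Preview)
The paper does not give its own proof of this theorem: it is stated in the preliminaries as a known result, attributed to Ajtai~\cite{Ajtai99} and Alwen--Peikert~\cite{AP09}, and invoked as a black box. So there is no ``paper's proof'' to compare against.

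That said, your sketch is a faithful outline of the Alwen--Peikert construction and is the right approach. A couple of small points worth tightening if you want to turn this into a self-contained argument: (i) in the actual Alwen--Peikert construction the second block $A_2$ is not simply $-A_1R$; there is an additional structured component (built from a Hermite-normal-form--like gadget on $A_1$) precisely so that the resulting basis has the claimed Gram--Schmidt profile rather than just short raw vectors, and (ii) the ordering in which you orthogonalize matters for the $\|\tilde S\|$ bound---you want the unit-vector block from $I_{m_2}$ to come \emph{first}, so the remaining orthogonalized vectors are projections of already-short vectors. Your identification of the Gram--Schmidt analysis as the crux is exactly right; filling in that part is essentially reproducing \cite{AP09}.
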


\begin{defn}[Gaussian distribution]
	Let $\L\subseteq\ZZ^m$ be a lattice. For a vector $\bf{c}\in\RR^m$ and a positive parameter $\sigma\in\RR$, define:
	$$\rho_{\sigma,\c}(\x)=\exp\left(\pi\frac{\|\x-\c\|^2}{\sigma^2}\right)\quad\text{and}\quad
	\rho_{\sigma,\c}(\L)=\sum_{\x\in\L}\rho_{\sigma,\c}(\x).    $$
	The discrete Gaussian distribution over $\L$ with center $\c$ and parameter $\sigma$ is
	$$\forall \bf{y}\in\L\quad,\quad\cal{D}_{\L,\sigma,\c}(\bf{y})=\frac{\rho_{\sigma,\c}(\bf{y})}{\rho_{\sigma,\c}(\L)}.$$
\end{defn}
For convenience, we will denote by $\rho_\sigma$ and $\cal{D}_{\L.\sigma}$ for $\rho_{\bf{0},\sigma}$ and $\cal{D}_{\L,\sigma,\bf{0}}$ respectively. When $\sigma=1$ we will write $\rho$ instead of $\rho_1$. We can extend this definition to a positive definite covariance matrix $\Sigma=BB^T$: $\rho_{\sqrt{\Sigma},\bf{c}}(\bf{x})=\exp(-\pi(\bf{x}-\bf{c})^T\Sigma^{-1}(\bf{x}-\bf{c}))$. It is well-known that for a vector $\bf{x}$ sampled in $D_{\ZZ^m,\sigma}$, one has that $\|\bf{x}\|\leq t\sigma\sqrt{m}$ with overwhelming probability.	

We recall below in Theorem~\ref{thm:Gauss} some useful result. The first one comes from~{\cite[Lemma 4.4]{MR04}} . The second one is from~\cite{CHKP10} and formulated in ~{\cite[Theorem 17]{ABB10-EuroCrypt}} and the last one is from~{\cite[Theorem 19]{ABB10-EuroCrypt}}.

\begin{thm}\label{thm:Gauss}
	Let $q> 2$ and let $A, B$ be a matrix in $\ZZ_q^{n\times m}$ with $m>n$ and $B$ is rank $n$. Let $T_A, T_B$ be a basis for $\Lp_q(A)$ and  $\Lp_q(B)$ respectively.
	Then for $c\in\RR^m$ and $U\in\ZZ_q^{n\times t}$:
	\begin{enumerate}
		\item Let $M$ be a matrix in $\ZZ_q^{n\times m_1}$ and $\sigma\geq\|\widetilde{T_A}\|\omega(\sqrt{\log(m+m_1)})$. Then there exists a PPT algorithm $\SampleLeft(A,M,T_A,U,\sigma)$ that outputs a matrix $\e\in\ZZ^{(m+m_1)\times t}$ distributed statistically close to $\cal{D}_{\L_q^{U}(F_1),\sigma}$ where $F_1:=(A~|~M)$. In particular $\e\in \L_q^{U}(F_1)$, i.e., $F_1\cdot\e=U\mod q$.
		
		In addition, if $A$ is rank $n$ then there is a PPT algorithm $\SampleBasisLeft(A,M,T_A,\sigma)$ that outputs a basis of $\Lambda_q^\perp(F_1)$.
		\item Let $R$ be a matrix in $\ZZ^{k\times m}$ and let $s_R:=\sup_{\|\x\|=1}\|R\x\|$. Let $F_2:=(A~|~AR+B)$. Then for  $\sigma\geq\|\widetilde{T_B}\|s_R\omega(\sqrt{\log m})$, there exists a PPT algorithm \\$\SampleRight(A,B,R,T_B,U,\sigma)$ that outputs a matrix $\e\in\ZZ^{(m+k)\times t}$ distributed statistically close to $\cal{D}_{\L_q^{U}(F_2),\sigma}$. In particular $\e\in \L_q^{U}(F_2)$, i.e., $F_2\cdot\e=U\mod q$. 
		
		Note that when $R$ is a random matrix in $\{-1,1\}^{m\times m}$ then $s_R<O(\sqrt{m})$ with overwhelming probability (cf.~{\cite[Lemma 15]{ABB10-EuroCrypt}}).
	\end{enumerate}
\end{thm}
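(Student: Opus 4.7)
The plan is to reduce each of the three claims to the preimage Gaussian sampler of Gentry--Peikert--Vaikuntanathan ($\SamplePre$), whose input is a sufficiently short basis of the relevant kernel lattice: given such a basis $T$ and parameter $\sigma\geq\|\widetilde{T}\|\cdot\omega(\sqrt{\log n})$, the nearest-plane sampler returns a vector statistically close to the target discrete Gaussian on the requested coset. The core task therefore becomes, in each case, the production of a short basis of the larger kernel lattice $\Lp_q(F_i)$ from the given $T_A$ or $T_B$.

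For part~1 ($\SampleLeft$ and $\SampleBasisLeft$), I would invoke the extension construction of Cash--Hofheinz--Kiltz--Peikert. Since $A$ has rank $n$, there exists $W\in\ZZ^{m\times m_1}$ with $AW\equiv M\pmod q$, and one directly checks that
\[
T_{F_1}\;=\;\begin{pmatrix} T_A & -W \\ 0 & I_{m_1} \end{pmatrix}
\]
is a basis of $\Lp_q(F_1)$ with $F_1=(A\mid M)$. Because $T_A$ already spans $\RR^m$, processing the $W$-columns of $T_{F_1}$ after those of $T_A$ leaves Gram--Schmidt vectors of the form $(\bf{0}\,;\,\bf{e}_j)$ of unit norm, hence $\|\widetilde{T_{F_1}}\|=\|\widetilde{T_A}\|$. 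Feeding $T_{F_1}$ into $\SamplePre$ with the hypothesised parameter yields a column statistically close to $\cal{D}_{\L_q^{U}(F_1),\sigma}$; the full matrix $\e$ is obtained column-by-column. For $\SampleBasisLeft$ one additionally applies the RandBasis randomisation of CHKP to $T_{F_1}$ so that the output-basis distribution is independent of the particular $T_A$.

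For part~2 ($\SampleRight$), the decisive observation is the identity
\[
F_2\cdot\begin{pmatrix} I_m & -R \\ 0 & I_k \end{pmatrix}\;=\;(A\mid B),
\]
which shows that the unimodular matrix $P=\bigl(\begin{smallmatrix} I_m & -R \\ 0 & I_k\end{smallmatrix}\bigr)$ maps $\L_q^{U}(A\mid B)$ bijectively onto $\L_q^{U}(F_2)$. The sampler therefore draws $(\e_1\,;\,\e_2)\in\L_q^{U}(A\mid B)$ by sampling $\e_1\sim\cal{D}_{\ZZ^m,\sigma'}$ unrestricted and then $\e_2\sim\cal{D}_{\L_q^{U-A\e_1}(B),\sigma'}$ using $T_B$ via GPV, and finally outputs $P\cdot(\e_1\,;\,\e_2)=(\e_1-R\e_2\,;\,\e_2)$. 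A Peikert-style convolution analysis shows the resulting distribution is statistically close to $\cal{D}_{\L_q^{U}(F_2),\sigma}$: the $-R$-block inflates the covariance of the composite Gaussian by a factor $O(s_R^2)$, which is exactly why the parameter bound reads $\sigma\geq\|\widetilde{T_B}\|\,s_R\,\omega(\sqrt{\log m})$. The main technical obstacle throughout is the careful tracking of Gram--Schmidt norms and covariances under these block and unimodular transformations; once those are controlled, statistical closeness follows from the standard lemmas of GPV and Micciancio--Regev.
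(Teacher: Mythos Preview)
The paper does not prove this theorem at all: it is stated as a compilation of known results, with part~1 attributed to \cite{CHKP10} (as formulated in \cite[Theorem~17]{ABB10-EuroCrypt}) and part~2 to \cite[Theorem~19]{ABB10-EuroCrypt}. So there is no ``paper's proof'' to compare against beyond those citations.

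Your sketch for part~1 is exactly the Cash--Hofheinz--Kiltz--Peikert argument that the citation points to, and is correct. One small remark: you invoke the existence of $W$ with $AW\equiv M$ assuming $A$ has rank $n$; the theorem as stated here only asserts rank $n$ for $B$, but the cited source does assume it for $A$, so this is an imprecision in the statement rather than in your argument.

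For part~2 your route genuinely differs from \cite{ABB10-EuroCrypt}. There, one builds an explicit short basis of $\Lp_q(F_2)$ from $T_B$ and $R$ (via the map $\bigl(\begin{smallmatrix}-R\\ I\end{smallmatrix}\bigr)$ together with a completion using a $W$ with $BW\equiv -A$), bounds its Gram--Schmidt norm by $\|\widetilde{T_B}\|\cdot s_R$ up to constants, and then calls GPV directly. Your approach via the unimodular $P$ and a Peikert-style convolution is a legitimate alternative, but as written it has a gap: sampling $(\e_1;\e_2)$ spherically and then applying $P$ yields covariance $\sigma'^2 PP^T$, which is \emph{not} spherical because of the $RR^T$ block. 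To land on the spherical $\cal{D}_{\L_q^{U}(F_2),\sigma}$ you must either pre-skew the input covariance by $(PP^T)^{-1}$ or add an independent correction perturbation \`a la Peikert; merely noting that ``the $-R$-block inflates the covariance'' does not close the argument. Once that compensation is made precise, both approaches give the same parameter bound $\sigma\geq\|\widetilde{T_B}\|\,s_R\,\omega(\sqrt{\log m})$, with the ABB route being slightly cleaner for this statement and your route generalising more readily to non-spherical targets.
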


The security of our construction reduces to the LWE (Learning With Errors) problem introduced by Regev~\cite{Regev05}.
\begin{defn}[LWE problem]
	Consider publicly a prime $q$, a positive integer $n$, and a distribution $\chi$ over $\ZZ_q$. An $(\ZZ_q,n,\chi)$-LWE problem instance consists of access to an unspecified challenge oracle $\cal{O}$, being either a noisy pseudorandom sampler $\cal{O}_\s$ associated with a secret $\s\in\ZZ_q^n$, or a truly random sampler $\cal{O}_\$$ who behaviors are as follows:
	\begin{description}
		\item[$\cal{O}_\s$:] samples of the form $(\u_i,v_i)=(\u_i,\u_i^T\s+x_i)\in\ZZ_q^n\times\ZZ_q$ where $\s\in\ZZ_q^n$ is a uniform secret key, $\u_i\in\ZZ_q^n$ is uniform and $x_i\in\ZZ_q$ is a noise withdrawn from $\chi$.
		\item[$\cal{O}_\$$:] samples are uniform pairs in $\ZZ_q^n\times\ZZ_q$.
	\end{description}\
	The $(\ZZ_q,n,\chi)$-LWE problem allows responds queries to the challenge oracle $\cal{O}$. We say that an algorithm $\cal{A}$ decides the $(\ZZ_q,n,\chi)$-LWE problem if 
	$$\Adv_{\cal{A}}^{\mathsf{LWE}}:=\left|\Pr[\cal{A}^{\cal{O}_\s}=1] - \Pr[\cal{A}^{\cal{O}_\$}=1] \right|$$    
	is non-negligible for a random $\s\in\ZZ_q^n$.
\end{defn}
Regev~\cite{Regev05} showed that (see Theorem~\ref{thm:LWE} below) when $\chi$ is the distribution $\overline{\Psi}_\alpha$ of the random variable $\lfloor qX\rceil\mod q$ where $\alpha\in(0,1)$ and $X$ is a normal random variable with mean $0$ and standard deviation $\alpha/\sqrt{2\pi}$ then the LWE problem is hard. 
\begin{thm}\label{thm:LWE}
	If there exists an efficient, possibly quantum, algorithm for deciding the $(\ZZ_q,n,\overline{\Psi}_\alpha)$-LWE problem for $q>2\sqrt{n}/\alpha$ then there is an efficient quantum algorithm for approximating the SIVP and GapSVP problems, to within $\tilde{\cal{O}}(n/\alpha)$ factors in the $l_2$ norm, in the worst case.
\end{thm}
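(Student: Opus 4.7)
The plan is to reproduce Regev's quantum worst-case-to-average-case reduction: starting from an arbitrary lattice $\Lambda$ of rank $n$, one uses the assumed LWE oracle to iteratively produce samples from discrete Gaussians $\cal{D}_{\Lambda,r}$ at ever smaller parameters $r$, and once $r$ drops to within an $\tilde{O}(n/\alpha)$ factor of $\lambda_n(\Lambda)$, the samples directly yield approximations to SIVP and GapSVP. The initial (wide) Gaussian samples are trivial to produce from any LLL-reduced basis of $\Lambda$, so everything rests on a single \emph{iterative step}: given poly-many samples from $\cal{D}_{\Lambda,r}$ and the LWE oracle, produce a single sample from $\cal{D}_{\Lambda,r'}$ with $r' \leq r/2$. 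After $\mathrm{poly}(n)$ iterations the parameter shrinks to the target, and standard lattice arguments (Micciancio--Regev) convert short Gaussian samples into witnesses for GapSVP/SIVP to the claimed factor.

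The iterative step itself I would break into two sub-reductions. First, a classical reduction showing that LWE on $(\ZZ_q,n,\overline{\Psi}_\alpha)$ implies a bounded-distance decoding (BDD) oracle on the dual lattice $\Lambda^\ast$ at distance $\alpha q/r$: given a target close to $\Lambda^\ast$, one embeds it into LWE samples by pairing it with the given $\cal{D}_{\Lambda,r}$ vectors, treating their inner products modulo $q$ as the LWE noisy samples and the unknown offset as the secret $\bf{s}$. One uses the fact that pairing a vector from $\cal{D}_{\Lambda,r}$ with a BDD target produces a noise distribution statistically close to $\overline{\Psi}_\alpha$, provided $r$ is above the smoothing parameter of $\Lambda$. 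Second, a quantum procedure that, given a coherent BDD oracle for $\Lambda^\ast$ at distance $\alpha q/r$, produces a sample from $\cal{D}_{\Lambda,r'}$ with $r' = \sqrt{n} \cdot q / (\sqrt{2} \, r) \leq r/2$ by the assumption $q > 2\sqrt{n}/\alpha$.

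The main obstacle, and the genuinely novel part of Regev's argument, is the quantum step. Concretely, one prepares the superposition $\sum_{\bf{x}\in\Lambda^\ast/q\Lambda^\ast} \rho_{1/r'}(\bf{x})\, |\bf{x}\rangle$ over a fine grid in a fundamental domain of $\Lambda^\ast$, queries the BDD oracle coherently to write the closest dual-lattice point into an ancilla register, uncomputes the ancilla, and finally applies a quantum Fourier transform. Correctness rests on (i) showing that the Gaussian mass outside the BDD radius is negligible, so the oracle's answers are coherent on essentially all of the support; and (ii) a Poisson-summation argument proving that the QFT of a Gaussian on $\Lambda^\ast$ of width $1/r'$ yields a Gaussian on $\Lambda$ of width $r'$ in the output register, up to negligible total-variation error. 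Handling both the oracle's partial correctness and the amplitude bookkeeping in the QFT is the delicate part; once done, iterating the step and converting the final short Gaussian samples into approximate shortest and linearly independent vectors completes the reduction to the stated $\tilde{O}(n/\alpha)$ factor.
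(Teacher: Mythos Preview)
Your sketch is a faithful outline of Regev's original quantum reduction and is correct in its main ideas. However, the paper does not prove this theorem at all: it is stated in the preliminaries as a known result attributed to Regev~\cite{Regev05}, with no accompanying argument. So there is nothing to compare your approach against; you have supplied a proof sketch where the paper simply cites the literature.
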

Hence if we assume  the hardness of approximating the SIVP and GapSVP problems in lattices of dimension $n$ to within polynomial (in $n$) factors, then it follows from Theorem~\ref{thm:LWE} that deciding the LWE problem is hard when $n/\alpha$ is a polynomial in $n$.

In this paper, we will also deal with ideal lattices, i.e., lattices arising from polynomial rings. Specially, when $n$ is a power of two, we consider the ring $R:=\ZZ[x]/(x^n+1)$, the ring of integers of the cyclotomic number field $\QQ[x]/(x^n+1)$. The ring $R$ is isomorphic to the integer lattice $\ZZ^n$ through mapping a polynomial $f=\sum_{i=0}^{n-1}f_ix^i$ to its vector of coefficients $(f_0,f_1,\cdots,f_{n-1})$ in $\ZZ^n$. In this paper, we will consider such ring $R$ and denote by $R_q=R/qR=\ZZ_q[x]/(x^n+1)$ where $q=1\mod 2n$ is a prime. In this section, we follow~\cite{BertFRS18-implement} to recall some useful results in ideal lattices.

	We use ring variants LWE, proposed by~\cite{LPR10}, and proven to be as hard as the GapSVP/SIVP problems on ideal lattices.


\begin{defn}[$\RLWE$] Given $\bf{a}=(a_1,\cdots,a_m)^T\in R_q^m$ a vector of $m$ uniformly random polynomials, and $\bf{b}=\bf{a}s+\bf{e}$ where $s\hookleftarrow U(R_q)$ and $\bf{e}\hookleftarrow D_{R^m,\sigma}$, distinguish $(\bf{a},\bf{b}=\bf{a}s+\bf{e})$ from $(\bf{a},\bf{b})$ drawn from the uniform distribution over $R_q^m\times R_q^m$.
	
\end{defn}

In this paper, we use the ring version of trapdoors for ideal lattices, introduced in~\cite{MP12} and recently improved in~\cite{GM18}.
\begin{defn}
	Define $\bf{g}=(1,2,4,\cdots,2^{k-1})^T\in R_q$ with $k=\lceil\log_2(q)\rceil$ and call $\bf{g}$ the \textbf{gadget vector}, i.e., for which the inversion of $f_{\bf{g}^T}(\bf{z})=\bf{g}^T\bf{z}\in R_q$ is easy. The lattice $\L_q^\perp(\bf{g}^T)$ has a publicly known basis $B_q\in R^{k\times k}$ which satisfies that $\|\widetilde{B}_q\|\leq\sqrt{5}$.
\end{defn}

\begin{defn}[$\bf{g}$-trapdoor]
	 Let $\bf{a}\in R_q^m$ and $\bf{g}\in R_q^k$ with $k=\lceil\log_2(q)\rceil$ and $m>k$. A $\bf{g}$-trapdoor for $\bf{a}$ consist in a matrix of small polynomials $T\in R^{(m-k)\times k}$, following a discrete Gaussian distribution of parameter $\sigma$, such that 
	$$\bf{a}^T\left(\begin{array}{c}T \\ I_k \end{array} \right) =h\bf{g}^T$$
	for some invertible element $h\in R_q$. The polynomial $h$ is called the tag associated to $T$. The quality of the trapdoor is measured by its largest singular value $s_1(T)$.
\end{defn}
The Algorithm~\ref{algo:TrapGen} shows how to generate a random vector $\bf{a}\in R_q^m$ together with its trapdoor $T$.

\begin{algorithm}\label{algo:TrapGen}
	\DontPrintSemicolon
	
	\KwInput{the ring modulus $q$, a Gaussian parameter $\sigma$, optional $\bf{a}'\in R_q^{m-k}$ and $h\in R_q$. If no $\bf{a}', h$ are given, the algorithm chooses $\bf{a}'\hookleftarrow U(R_q^{m-k})$ and $h=1$.}
	\KwOutput{$\bf{a}\in R_q^m$ with its trapdoor $T\in R^{(m-k)\times k}$, of norm $\| T\|\leq t\sigma\sqrt{(m-k)n}$ associated to the tag $h$.}
	
	Choose $T\hookleftarrow D_{R^{(m-k)\times k},\sigma}, \bf{a}=(\bf{a}'^T|h\bf{g}-\bf{a}'^TT	)^T$.
	
	Return $(\bf{a},T)$.

	\caption{Algorithm  $\TrapGen(q,\sigma,\bf{a}',h)$}
\end{algorithm}

One of the main algorithm we use in our scheme is the preimage sampling algorithm $\SamplePre$, illustrated in Algorithm~\ref{algo:SamplePre}, which finds $\bf{x}$ such that $f_{\bf{a}^T}(\bf{x})=u$ for a given $u\in R_q$ and a public $\bf{a}\in R_q^m$ using the $\bf{g}$-trapdoor $T$ of $\bf{a}$, where $(\bf{a},T)\gets\TrapGen(q,\sigma,\bf{a}',h)$ as in Algorithm~\ref{algo:TrapGen}.

  \begin{algorithm}\label{algo:SamplePre}
 	\DontPrintSemicolon
 	
 	\KwInput{$\bf{a}\in R_q^m$, with its trapdoor $T\in R^{(m-k)\times k}$ associated to an invertible tag $h\in R_q$, $u\in R_q$ and $\zeta,\sigma,\alpha$ three Gaussian parameters}
 	\KwOutput{$\bf{x}\in R_q^m$ following a discrete Gaussian distribution of parameters $\zeta$ satisfying $\bf{a}^T\bf{x}=u\in R_q$.}
 	
 	$\bf{p}\gets\mathsf{SampleP}(q,\zeta,\alpha,T)$, $v\gets h^{-1}(u-\bf{a}^T\bf{p})$
 	
 	$\bf{z}\gets\mathsf{SamplePolyG}(\sigma,v)$, $\bf{x}\gets \bf{p}+\left(\begin{array}{c}T \\ I_k \end{array} \right)\bf{z}$
 	
 	\caption{Algorithm  $\mathsf{SamplePre}(T,\bf{a},h,\zeta,\sigma,\alpha,u)$}
 \end{algorithm}

The algorithm~$\SamplePre$ uses the following two algorithm:
\begin{itemize}
	\item The algorithm $\SampleP(q,\zeta,\alpha,T)$ on input the ring modulus $q$, $\zeta$ and $\alpha$ two Gaussian parameters and $T\hookleftarrow D_{R^{(m-k)\times k},\sigma}$, output $\bf{p}\hookleftarrow D_{R^m,\sqrt{\Sigma_{\bf{p}}}}$ where $\Sigma_\bf{p}=\zeta^2I_m-\alpha^2\left(\begin{array}{c}
	T \\ I_k
	\end{array}\right)(T^T I_k)$.
	\item The algorithm $\mathsf{SamplePolyG}$ on input a Gaussian parameter $\sigma$ and a target $v\in R_q$, outputs $\bf{z}\hookleftarrow D_{\Lp_q(\bf{g}^T),\alpha,v}$ with $\alpha=\sqrt{5}\sigma$.
\end{itemize}
For the special case of the cyclotomic number ring $R=\ZZ[x]/(x^n+1)$ in our paper, the algorithm ~$\SampleP$ can be efficiently implemented as in~{\cite[Section 4.1]{GM18}}. For algorithm~$\mathsf{SamplePolyG}$, one needs to call the algorithm~$\mathsf{SampleG}$ in~{\cite[Section 3.2]{GM18}} $n$ times.

Our construction follows the IBE construction in~\cite{ABB10-EuroCrypt}. In such a case, we need an encoding hash function $H:\ZZ_q^n\to R_q$ to map identities to $\ZZ_q^n$ to invertible elements in $R_q$; such an $H$ is called \textit{encoding with Full-Rank Differences (FRD)} in \cite{ABB10-EuroCrypt}. We require that $H$ satisfies the following properties:
\begin{itemize}
	\item for all distinct $u,v\in\ZZ_q^n$, the element $H(u)-H(v)\in R_q$ is invertible; and
	\item $H$ is computable in polynomial time (in $n\log(q)$).
\end{itemize}
To implement such an encoding $H$, there are several methods proposed in~\cite{GM18,DM14,LS18} and we refer to {\cite[Section 2.4]{GM18}} for more details.

\section{PKEET-FA Construction over Integer Lattices}
	\label{sec:PKEET-FA}
	\begin{description}
	\item[Setup($\lambda$)]$\\$ On input a security parameter $\lambda$, set the parameters $q,n,m,\sigma,\alpha$ as in section \ref{sec:params}
	\begin{enumerate}
		\item Use $\TrapGen(q,n)$ to generate uniformly random $n\times m$-matrices $A, A'\in\ZZ_q^{n\times m}$ together with trapdoors $T_{A}$ and $T_{A'}$ respectively.
		\item Select $l+1$ uniformly random $n\times m$ matrices $A_1,\cdots,A_l,B\in\ZZ_q^{n\times m}$.
		\item Let $H: \{0,1\}^*\to \{0,1\}^t$ and $H':\{0,1\}^*\to \{-1,1\}^l$ be hash functions.
		\item Select a uniformly random matrix $U\in\ZZ_q^{n\times t}$.
		\item Output the public key and the secret key
		$$\PK=(A,A',A_1,\cdots,A_l,B,U)~,~\SK=(T_A,T_{A'}).$$
	\end{enumerate} 
	
	\item[Encrypt($\PK,\bf{m}$)]$\\$ On input the public key $\PK$ and a message $\bf{m}\in\{0,1\}^t$, do:
	\begin{enumerate}
		\item Choose a uniformly random $\bf{s}_1, \bf{s}_2\in\ZZ_q^n$
		\item Choose $\bf{x}_1,\bf{x}_2\in\overline{\Psi}_\alpha^t$ and compute\footnote{Note that for a message $\bf{m}\in\{0,1\}^t$, we choose a random binary string $\bf{m}'$ of fixed length $t'$ large enough and by abusing of notation, we write $H(\bf{m})$ for $H(\bf{m}'\|\bf{m})$.}
		\begin{align*}
		\bf{c}_1 &= U^T\bf{s}_1 +\bf{x}_1 +\bf{m}\big\lfloor\frac{q}{2}\big\rfloor\\
		\bf{c}_2 &= U^T\bf{s}_2 +\bf{x}_2 +H(\bf{m})\big\lfloor\frac{q}{2}\big\rfloor \in\ZZ_q^t.
		\end{align*}
		
		\item Choose $\bf{b}\in\{-1,1\}^l$, and set 
		$$F_1=(A|B+\sum_{i=1}^lb_iA_i)~,~ F_2=(A'|B+\sum_{i=1}^lb_iA_i).$$
		\item Choose $l$ uniformly random matrices $R_i\in\{-1,1\}^{m\times m}$ for $i=1,\cdots,l$ and define $R=\sum_{i=1}^lb_iR_i\in\{-l,\cdots,l\}^{m\times m}$.
		\item Choose $\bf{y}_1, \bf{y}_2\in\overline{\Psi}_\alpha^m$ and set $\bf{z}_1=R^T\bf{y}_1, \bf{z}_2=R^T\bf{y}_2\in\ZZ_q^m$.
		\item Compute
		\begin{align*}
		\bf{c}_3&=F_1^T\bf{s}_1+[\bf{y}_1^T|\bf{z}_1^T]^T\in\ZZ_q^{2m},\\ \bf{c}_4&=F_2^T\bf{s}_2+[\bf{y}_2^T|\bf{z}_2^T]^T\in\ZZ_q^{2m}.
		\end{align*}
		
		\item The ciphertext is $$\CT=(\bf{b}, \bf{c}_1,\bf{c}_2,\bf{c}_3,\bf{c}_4)\in \{-1,1\}^l \times \ZZ_q^{2t+4m}.$$    
	\end{enumerate}
	
	\item[Decrypt($\PK,\SK,\CT$)]$\\$ On input public key $\PK$, private key $\SK$ and a ciphertext $\CT=(\bf{b}, \bf{c}_1,\bf{c}_2,\bf{c}_3,\bf{c}_4)$, do:
	\begin{enumerate}
		\item Sample $\bf{e}\in\ZZ^{2m\times t}$ from $$\bf{e}\gets\SampleLeft(A, B+\sum_{i=1}^lb_iA_i,T_A,U,\sigma).$$
		Note that $F_1\cdot\bf{e}=U$ in $\ZZ^{n\times t}_q$.
		\item Compute $\bf{w}\gets\bf{c}_1-\bf{e}^T\bf{c}_3\in\ZZ_q^t$.
		\item For each $i=1,\cdots, t$, compare $w_i$ and $\lfloor\frac{q}{2}\rfloor$. If they are close, output $m_i=1$ and otherwise output $m_i=0$. We then obtain the message $\bf{m}$.
		\item Sample $\bf{e}'\in\ZZ^{2m\times t}$ from
		$$\bf{e}'\gets\SampleLeft(A', B+\sum_{i=1}^lb_iA_i,T_{A'},U,\sigma).$$
		\item Compute $\bf{w}'\gets\bf{c}_2-(\bf{e}')^T\bf{c}_4\in\ZZ_q^t$.
		\item For each $i=1,\cdots,t$, compare $w'_i$ and $\lfloor\frac{q}{2}\rfloor$. If they are close, output $h_i=1$ and otherwise output $h_i=0$. We then obtain the vector $\bf{h}$.
		\item If $\bf{h}=H(\bf{m})$ then output $\bf{m}$, otherwise output $\perp$.
	\end{enumerate}
	
	Let $U_i$ and $U_j$ be two users of the system. We denote by $\CT_i = (\bf{c}_{i1},\bf{c}_{i2},\bf{c}_{i3},\bf{c}_{i4})$ (resp. $\CT_j = (\bf{c}_{j1},\bf{c}_{j2},\bf{c}_{j3},\bf{c}_{j 4})$) be a ciphertext of $U_i$ (resp. $U_j$).\\
	
	\item[Type-1 Authorization]~~
	
	\begin{itemize}
		\item $\Td_1$($\SK_i$): On input a user $\mathcal{U}_i$'s secret key $\SK_i=(K_{i,1}, K_{i,2})$, it outputs a trapdoor $\td_{1, i}=K_{i,2}$.
		\item Test($\td_{1, i},\td_{1, j},\CT_i,\CT_j$): On input trapdoors $\td_{1, i}, \td_{1, j}$ and ciphertexts $\CT_i,\CT_j$ for users $\mathcal{U}_i, \mathcal{U}_j$ respectively, computes 
		\begin{enumerate}
			\item For each $i$ (resp. $j$), do the following:
			\begin{itemize}
				\item $\bf{b}_i= (b_{i1},\cdots,b_{il})$ and sample $\bf{e}_i\in\ZZ^{2m\times t}$ from
				$$\bf{e_i}\gets\SampleLeft(A'_i, B_i+\sum_{k=1}^lb_{ik}A_{ik},T_{A'_i},U_i,\sigma).$$
				Note that $F_{i2}\cdot\bf{e}_i=U_i$ in $\ZZ^{n\times t}_q$.
				\item Compute $\bf{w}_i\gets \bf{c_{i2}}-\bf{e}_i^T\bf{c}_{i4}\in\ZZ_q^t$. For each $k=1,\cdots, t$, compare each coordinate $w_{ik}$ with $\lfloor\frac{q}{w}\rfloor$ and output $\bf{h}_{ik}=1$ if they are close, and $0$ otherwise. At the end, we obtain the vector $\bf{h}_i$ (resp. $\bf{h}_j$). 
			\end{itemize}
			\item Output $1$ if  $\bf{h}_i=\bf{h}_j$ and $0$ otherwise.
		\end{enumerate}
		
	\end{itemize}

	\item[Type-2 Authorization]:
	
	\begin{itemize}
		\item $\Td_2$($\SK_i, \CT_i$): On input a user $\mathcal{U}_i$'s secret key $\SK_i=(K_{i,1}, K_{i,2})$ and ciphertext $\CT_i$, it outputs a trapdoor $\td_{2, i}$ in following manner:\\
		$\bf{b}_i =(b_{i1},\cdots,b_{il})$ and sample $\td_{2, i} = \bf{e}_i\in\ZZ^{2m\times t}$ from
		$$\bf{e_i}\gets\SampleLeft(A'_i, B_i+\sum_{k=1}^lb_{ik}A_{ik},T_{A'_i},U_i,\sigma).$$
		Note that $F_{i2}\cdot\bf{e}_i=U_i$ in $\ZZ^{n\times t}_q$.
		\item Test($\td_{2, i},\td_{2, j},\CT_i,\CT_j$): On input trapdoors $\td_{2, i}, \td_{2, j}$ and ciphertexts $\CT_i,\CT_j$ for users $\mathcal{U}_i, \mathcal{U}_j$ respectively, computes 
		\begin{enumerate}
			\item For each $i$ (resp. $j$), do the following:
			Compute $\bf{w}_i\gets \bf{c_{i2}}-\bf{e}_i^T\bf{c}_{i4}\in\ZZ_q^t$. For each $k=1,\cdots, t$, compare each coordinate $w_{ik}$ with $\lfloor\frac{q}{w}\rfloor$ and output $\bf{h}_{ik}=1$ if they are close, and $0$ otherwise. At the end, we obtain the vector $\bf{h}_i$ (resp. $\bf{h}_j$). 
			\item Output $1$ if  $\bf{h}_i=\bf{h}_j$ and $0$ otherwise.
		\end{enumerate}
		
	\end{itemize}

	\item[Type-3 Authorization]:
	
	\begin{itemize}
		\item $\Td_{3, i}$($\SK_i, \CT_i$): On input a user $\mathcal{U}_i$'s secret key $\SK_i=(K_{i,1}, K_{i,2})$ and ciphertext $\CT_i$, it outputs a trapdoor $\td_{3, i}$ in following manner:\\
		$\bf{b}_i =(b_{i1},\cdots,b_{il})$ and sample $\td_{2, i} = \bf{e}_i\in\ZZ^{2m\times t}$ from
		$$\bf{e_i}\gets\SampleLeft(A'_i, B_i+\sum_{k=1}^lb_{ik}A_{ik},T_{A'_i},U_i,\sigma).$$
		Note that $F_{i2}\cdot\bf{e}_i=U_i$ in $\ZZ^{n\times t}_q$.
		\item $\Td_{3,j}$($\SK_j$): On input a user $U_j$'s secret key $\SK_j=(K_{j,1}, K_{j,2})$, it outputs a trapdoor $\td_{3, j}=K_{j,2}$.
		\item Test($\td_{3, i},\td_{3, j},\CT_i,\CT_j$): On input trapdoors $\td_{3, i}, \td_{3, j}$ and ciphertexts $\CT_i,\CT_j$ for users $\mathcal{U}_i, \mathcal{U}_j$ respectively, computes 
		\begin{enumerate}
			\item For each $i$, do the following:
			Compute $\bf{w}_i\gets \bf{c_{i2}}-\bf{e}_i^T\bf{c}_{i4}\in\ZZ_q^t$. For each $k=1,\cdots, t$, compare each coordinate $w_{ik}$ with $\lfloor\frac{q}{w}\rfloor$ and output $\bf{h}_{ik}=1$ if they are close, and $0$ otherwise. At the end, we obtain the vector $\bf{h}_i$. 
			\item $\bf{b}_j= (b_{i1},\cdots,b_{jl})$ and sample $\bf{e}_j\in\ZZ^{2m\times t}$ from
			$$\SampleLeft(A'_j, B_j+\sum_{k=1}^lb_{jk}A_{jk},T_{A'_j},U_j,\sigma).$$
			Note that $F_{j2}\cdot\bf{e}_j=U_j$ in $\ZZ^{n\times t}_q$.
			\item Compute $\bf{w}_j\gets \bf{c_{j2}}-\bf{e}_j^T\bf{c}_{j4}\in\ZZ_q^t$. For each $k=1,\cdots, t$, compare each coordinate $w_{jk}$ with $\lfloor\frac{q}{w}\rfloor$ and output $\bf{h}_{jk}=1$ if they are close, and $0$ otherwise. At the end, we obtain the vector $\bf{h}_j$.
			\item Output $1$ if  $\bf{h}_i=\bf{h}_j$ and $0$ otherwise.
		\end{enumerate}
		
	\end{itemize}
\end{description}

\begin{thm}
	Our PKEET-FA construction above is correct if $H$  is a collision-resistant hash function.
\end{thm}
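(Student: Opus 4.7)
The plan is to verify the three correctness conditions from Section~2.1 one at a time, observing that the first condition is essentially the correctness of the underlying ABB10 IBE applied twice, and the second and third conditions reduce to properties of $H$ once Test is shown to correctly recover $H(\bf{m}_i)$ and $H(\bf{m}_j)$ from the respective ciphertexts.

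First I would establish condition (1), i.e.\ $\Dec(\SK_i,\CT_i)=\bf{m}$ when $\CT_i\gets\Enc(\PK_i,\bf{m})$. By construction, $\bf{e}$ sampled via $\SampleLeft(A,B+\sum_k b_kA_k,T_A,U,\sigma)$ satisfies $F_1\cdot\bf{e}=U\bmod q$, so
\begin{equation*}
\bf{c}_1-\bf{e}^T\bf{c}_3 \;=\; \bf{x}_1+\bf{m}\lfloor q/2\rfloor-\bf{e}^T[\bf{y}_1^T\,|\,\bf{z}_1^T]^T \bmod q.
\end{equation*}
I would bound the error term coordinate-wise using the standard Gaussian tail bound $\|\bf{e}\|\le\sigma\sqrt{2m}$ (from Theorem~\ref{thm:Gauss}), together with $\|\bf{z}_1\|=\|R^T\bf{y}_1\|$ controlled by $s_R\le O(l\sqrt{m})$ and $\bf{y}_1,\bf{x}_1\sim\overline{\Psi}_\alpha$, to obtain an error bounded by $q/5$ with overwhelming probability under the parameter constraints of Section~\ref{sec:params}. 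The rounding step then recovers $\bf{m}$ bitwise. An identical argument applied to $(A',T_{A'},\bf{c}_2,\bf{c}_4)$ recovers $\bf{h}=H(\bf{m})$; the final equality check $\bf{h}=H(\bf{m})$ therefore succeeds and $\Dec$ returns $\bf{m}$.

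Next I would handle condition (2). For each of the three authorization types, the Test algorithm simply re-runs the IBE-style decryption of the $(\bf{c}_{i2},\bf{c}_{i4})$ component to recover a vector $\bf{h}_i$, and analogously $\bf{h}_j$. The same error analysis as above (for Type-1 the sampling is done inside Test; for Types 2 and 3 the sample $\bf{e}_i$ was produced by $\Td_2$ or $\Td_{3,i}$ but via the same $\SampleLeft$ call) shows that $\bf{h}_i=H(\bf{m}_i)$ and $\bf{h}_j=H(\bf{m}_j)$ with overwhelming probability. If $\bf{m}_i=\bf{m}_j$, then $H(\bf{m}_i)=H(\bf{m}_j)$, hence $\bf{h}_i=\bf{h}_j$ and Test outputs $1$.

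Finally for condition (3), suppose Test outputs $1$ on a pair of honestly decryptable ciphertexts with $\Dec(\SK_i,\CT_i)=\bf{m}_i\neq\bf{m}_j=\Dec(\SK_j,\CT_j)$. By the error analysis above, with overwhelming probability $\bf{h}_i=H(\bf{m}_i)$ and $\bf{h}_j=H(\bf{m}_j)$, so the event $\bf{h}_i=\bf{h}_j$ yields a collision $H(\bf{m}_i)=H(\bf{m}_j)$ with $\bf{m}_i\neq\bf{m}_j$. By collision resistance of $H$, this occurs only with negligible probability in $\lambda$; combining with the negligible probability that the Gaussian/noise tail bounds fail gives the required negligible bound. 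The main obstacle is really the parameter bookkeeping in step (1) (ensuring the compound noise $\bf{x}_1-\bf{e}^T[\bf{y}_1^T|\bf{z}_1^T]^T$ stays below $q/4$), since everything else is a direct reduction to either the ABB decryption analysis or the collision resistance of $H$.
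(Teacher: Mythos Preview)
Your proposal is correct and follows the same logical structure as the paper's proof: decryption correctness of the underlying ABB-style encryption gives condition (1), the Test procedure recovers $H(\bf{m}_i)$ and $H(\bf{m}_j)$ so condition (2) is immediate, and condition (3) reduces to collision resistance of $H$. The paper's own proof is considerably terser (it simply asserts these points without the explicit error-term bookkeeping you carry out), but the underlying argument is the same.
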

\begin{proof}
	It is easy to see that if $\CT$ is a valid ciphertext of $\bf{m}$ then the decryption will always output $\bf{m}$. Moreover, if $\CT_i$ and $\CT_j$ are valid ciphertext of $\bf{m}$ and $\bf{m}'$ of user $U_i$ and $U_j$ respectively. Then the Test process of Type-$\a$ (for $\a = 1, 2, 3$) checks whether $H(\bf{m})=H(\bf{m}')$. If so then it outputs $1$, meaning that $\bf{m}=\bf{m}'$, which is always correct with overwhelming probability since $H$ is collision resistant. Hence our PKEET-FA described above is correct.
\end{proof}
	
	\subsection{Parameters}\label{sec:params}
	We follow~{\cite[Section 7.3]{ABB10-EuroCrypt}} for choosing parameters for our scheme. Now for the system to work correctly we need to ensure
	\begin{itemize}
		\item the error term in decryption is less than ${q}/{5}$ with high probability, i.e., $q=\Omega(\sigma m^{3/2})$ and $\alpha<[\sigma lm\omega(\sqrt{\log m})]^{-1}$,
		\item that the $\TrapGen$ can operate, i.e., $m>6n\log q$,
		\item that $\sigma$ is large enough for $\SampleLeft$ and $\SampleRight$, i.e.,
		$\sigma>lm\omega(\sqrt{\log m})$,
		\item that Regev's reduction applies, i.e., $q>2\sqrt{n}/\alpha$,
		\item that our security reduction applies (i.e., $q>2Q$ where $Q$ is the number of identity queries from the adversary).
	\end{itemize}
	Hence the following choice of parameters $(q,m,\sigma,\alpha)$ from \cite{ABB10-EuroCrypt} satisfies all of the above conditions, taking $n$ to be the security parameter:
	\begin{equation}\label{eq:params}
	\begin{aligned}
	& m=6n^{1+\delta}\quad,\quad q=\max(2Q,m^{2.5}\omega(\sqrt{\log n})) \\
	& \sigma = ml\omega(\sqrt{\log n})\quad,\quad\alpha=[l^2m^2\omega(\sqrt{\log n})]
	\end{aligned}
	\end{equation}
	and round up $m$ to the nearest larger integer and $q$ to the nearest larger prime. Here we assume that $\delta$ is such that $n^\delta>\lceil\log q\rceil=O(\log n)$.
	\subsection{Security analysis}
	In this section, we will prove that our proposed scheme is OW-CCA secure against Type-I adversaries (cf.~Theorem~\ref{thm:OW}) and IND-CCA secure against Type-II adversaries (cf. Theorem~\ref{thm:IND}).
	
	\begin{theorem}\label{thm:OW}
		The PKEET with parameters $(q,n,m,\sigma,\alpha)$ as in~\eqref{eq:params} is $\OWa$ secure provided that $H$ is a one-way hash function, $H'$ is a collision-resistant hash function, and Full-IBE of \cite{ABB10-EuroCrypt} is IND-CPA secure based on the hardness of $(\ZZ_q,n,\bar\Psi_\a)$-LWE assumption. 
\end{theorem}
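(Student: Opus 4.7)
The plan is a two-step game-hopping reduction from $\OWa$ of the scheme against Type-I adversaries to (i) the IND-CPA security of the ABB10 Full-IBE, which itself reduces to $(\ZZ_q,n,\bar{\Psi}_\a)$-LWE, and (ii) the one-wayness of $H$; the tag $\b=H'(\c_1,\c_2)$ together with collision-resistance of $H'$ plays a Canetti--Halevi--Katz-style role, routing decryption-oracle queries through the IBE KeyGen oracle so that CCA soundness is preserved.

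Game $0$ is the original $\OWa$ game; Game $1$ replaces the challenge pair $(\c_1^*,\c_3^*)$ --- the Full-IBE encryption of the random target message $\m$ under identity $\b^*$ with matrix $A$ --- by uniformly random vectors. The hop is justified by a reduction $\cal{B}$ that plants its IBE parameters as $(A,A_1,\ldots,A_l,B,U)$ and runs $\TrapGen$ itself to obtain $(A',T_{A'})$. Holding $T_{A'}$, $\cal{B}$ simulates every oracle: $\cal{O}^{\SK}$ on non-target users by fresh honest setups; $\cal{O}^{\Td_1}$ on the target by returning $T_{A'}$; $\cal{O}^{\Td_2}$ via $\SampleLeft$ with $T_{A'}$; and $\cal{O}^{\Dec}$ on $(\theta,(\b,\c_1,\c_2,\c_3,\c_4))$ by asking the IBE KeyGen oracle for a short decryption matrix on identity $\b=H'(\c_1,\c_2)$, recovering $\m$ from $(\c_1,\c_3)$, then using $T_{A'}$ to extract $\bf{h}$ from $(\c_2,\c_4)$ and outputting $\m$ iff $H(\m)=\bf{h}$. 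For the challenge, $\cal{B}$ produces $(\c_2^*,\c_4^*)$ as an honest encryption of $H(\m)$ under $A'$, sets $\b^*:=H'(\c_1^*,\c_2^*)$, and forwards $\b^*$ with message pair $(\m,0^t)$ to the IBE challenger to receive $(\c_1^*,\c_3^*)$. Collision-resistance of $H'$ guarantees that no Phase-$1$ decryption query forces a KeyGen call on $\b^*$ unless $(\c_1,\c_2)=(\c_1^*,\c_2^*)$, i.e.\ a trivial replay of the challenge. Thus $\cal{B}$'s IBE-CPA advantage upper-bounds the Game $0$/Game $1$ gap.

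In Game $1$, $(\c_1^*,\c_3^*)$ is uniform and independent of $\m$, so the only information about $\m$ in $\cal{A}$'s view flows through $H(\m)$, encoded in $(\c_2^*,\c_4^*)$ and recoverable via $T_{A'}$. A successful $\cal{A}$ therefore inverts $H$: a reduction $\cal{C}$, given $h^*=H(\m)$ for random $\m$, generates all parameters honestly (including $T_A,T_{A'}$), samples $(\c_1^*,\c_3^*)$ uniformly, produces $(\c_2^*,\c_4^*)$ as an honest IBE encryption of $h^*$ under $A'$, and answers all oracle queries from its own trapdoors; outputting $\cal{A}$'s response $\m'$ yields $H(\m')=h^*$ whenever $\cal{A}$ succeeds.

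The main obstacle is the chicken-and-egg dependency in $\cal{B}$'s challenge: $\b^*=H'(\c_1^*,\c_2^*)$ must be committed to obtain $\c_1^*$, yet the IBE challenger releases $\c_1^*$ only after $\b^*$ is fixed. This is resolved by the adaptive, partition-style analysis of the ABB10 Full-IBE --- the reason for the requirement $q>2Q$ in \eqref{eq:params} --- which absorbs a polynomial guessing factor. A secondary subtlety is handling decryption queries that re-randomise $(\c_3,\c_4)$ while keeping $(\c_1,\c_2)=(\c_1^*,\c_2^*)$; these must be shown to fail the consistency check $H(\m)=\bf{h}$ except on the challenge itself, using collision-resistance of $H$ together with the noise bounds supplied by Theorems~\ref{thm:TrapGen} and \ref{thm:Gauss}.
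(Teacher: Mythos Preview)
Your two-step game hop is structurally different from, and more complete than, the paper's argument. The paper gives a single direct reduction: the simulator $\cal S$ receives the Full-IBE public parameters $(A,A_1,\ldots,A_l,B,U)$, runs $\TrapGen$ itself to obtain $(A',T_{A'})$, answers every decryption query on the target user by calling the IBE KeyGen oracle on the tag $\b$ carried in the submitted ciphertext, and at challenge time picks $\m$, computes $(\c_2^*,\c_4^*)$ honestly, chooses a fresh identity $id^*$ distinct from all previously queried tags, and forwards $(\m,id^*)$ to the IBE challenger to receive $(\c_1^*,\c_3^*)$ (real or random). If $\cal A$ outputs $\m$, $\cal S$ guesses ``real''; otherwise ``random''. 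The paper then asserts the simulation is perfect and stops. Your decomposition makes explicit what the paper leaves implicit: in the ``random'' branch the Type-I adversary, holding $T_{A'}$, can still extract $H(\m)$ from $(\c_2^*,\c_4^*)$, so its residual success probability is bounded by the one-wayness of $H$ rather than by zero. That is exactly why one-wayness of $H$ appears among the hypotheses, and your Game~1 supplies the missing step.

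On the chicken-and-egg obstacle: the paper simply does not face it, because its reduction does \emph{not} set the challenge tag to $H'(\c_1^*,\c_2^*)$; it picks $id^*$ fresh and hands the full tuple $(id^*,\c_1^*,\c_2^*,\c_3^*,\c_4^*)$ to $\cal A$. This is consistent with the encryption in Section~\ref{sec:PKEET-FA}, where $\b$ is sampled in $\{-1,1\}^l$ prior to forming $F_1,F_2$ and is not computed from $(\c_1,\c_2)$, the description in the introduction notwithstanding. Under that reading there is no circularity and no need to open up the ABB10 internals. If one instead insists on $\b^*=H'(\c_1^*,\c_2^*)$ as you do, your concern is real and is \emph{not} dissolved merely by invoking adaptive security of Full-IBE: adaptivity lets the challenge identity depend on the master public key and prior KeyGen answers, but it must still be fixed before the challenge ciphertext is released, so the circular dependence persists. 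The clean fix in that variant is the route taken in Theorem~\ref{thm:OWCPA} for the ideal-lattice scheme: bypass the black-box IBE reduction and reduce directly to LWE, programming the simulated parameters around a challenge tag fixed in advance.
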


\begin{proof}
We show $\OWa$ security in the standard model. Let $\mathcal{A}$ be a PPT adversary that attacks the PKEET-FA scheme and has advantage $\epsilon$. We will construct an adversary $\mathcal{S}$ that attacks the Full-IBE of \cite{ABB10-EuroCrypt} by simulating the view of $\mathcal{A}$, and has advantage $\epsilon$.

The adversary $\mathcal{S}$ works as follows:\\
Let $\mathcal{A}$ choose $\mathcal{U}_{\theta}$ as target user.\\
 On input  $\PK=(A,A_1,\cdots,A_l,B,U)$ of Full-IBE, $\mathcal{S}$ append $A'$ in $\PK$, generates $T_{A'}$, sets the PKEET-FA instance for $\mathcal{U}_{\theta}$, and simulates the view of $\mathcal{A}$ for $\mathcal{U}_{\theta}$.\\
 For other user $\mathcal{U}_i$, $\mathcal{S}$ sets $\PK=(A_i,A'_i,A_1,\cdots,A_l,B_i,U_i)$, and generates $T_{A_i}$, $T_{A'_i}$ by $\TrapGen(q,n)$ to simulates the view of $\mathcal{A}$ according to the scheme.\\

 \begin{enumerate}
 \item $Query~ Phase:$
 \begin{itemize}
                \item $Secret~ Key~ Query$: On $\mathcal{A}$'s query for the secret key of $i(\neq \theta)$-th user,   $\mathcal{S}$ generates $T_{A_i}$, $T_{A'_i}$ by $\TrapGen(q,n)$ and return to $\mathcal{A}$.
		\item $Decryption~ Query$: 
		\begin{itemize}
		    \item On $\mathcal{A}$'s query on a pair of an index $i(\neq \theta)$ and a ciphertext $\CT_i$, $\mathcal{S}$ returns the output of $\Dec(\SK_i,\CT_i)$ using the secret key of the user $U_i$.
		    \item $\mathcal{A}$'s query on a pair of an index $\theta$ and a ciphertext $\CT_{\theta}$, $\mathcal{S}$ makes secret key query to Full-IBE oracle for secret key for the identity $\bf{b}_{\theta}$ and receives $\bf{e}_{\theta}$. $\mathcal{S}$ decrypt $\CT_{\theta}$ according to $\Dec$ of PKEET-FA and returns to $\mathcal{A}$.
		\end{itemize}
		\item $\Td_1~ Query$: On $\mathcal{A}$'s query on an index $i$, $\mathcal{S}$ returns $\td_{1,i}\gets\Td(\SK_i)$.
		\item $\Td_2~ Query$: On $\mathcal{A}$'s query on a pair of an index $i$ and a ciphertext $\CT_i$, $\mathcal{S}$ returns $\td_{2,i} \gets \Td_2(\SK_i, \CT_i)$.
		 
		 \end{itemize}
	\item $Challenge~ Phase:$ $\mathcal{S}$ chooses $m$ from message space and sends $m, id^*(\neq b_{\theta})$ to the challenger of Full-IBE and Encrypt $\bf{H(m)}$ i.e., $\bf{c}_2,\bf{c}_4$. Challenger of Full-IBE choose $r \gets_\$ \{0, 1\}$. 
	\begin{itemize}
	    \item If $r = 0$,  Challenger of Full-IBE computes $Encrypt(\PK, id^*, m) = (\bf{c}_1,\bf{c}_3)$, and sends $(\bf{c}_1,\bf{c}_3)$ to $\mathcal{S}$.
	    \item If $r = 1$,  Challenger of Full-IBE chose $(\bf{c}_1,\bf{c}_3) \gets_\$ \ZZ_q^{t} \times \ZZ_q^{2m}$, and sends $(\bf{c}_1,\bf{c}_3)$ to $\mathcal{S}$.
	\end{itemize}
	After getting reply, $\mathcal{S}$ will send $(id^*, \bf{c}_1,\bf{c}_2, \bf{c}_3,\bf{c}_4)$ to $\mathcal{A}$.
	\item $Decision~ Phase:$ 
	\begin{itemize}
	    \item If $\mathcal{A}$ sends $m$ to $\mathcal{S}$, $\mathcal{S}$ will sends $r = 0$ to the Challenger of Full-IBE.
	    \item If $\mathcal{A}$ sends $\perp$ to $\mathcal{S}$, $\mathcal{S}$ will sends $r = 1$ to the Challenger of Full-IBE.
	\end{itemize}
\end{enumerate}

So, the simulation of view of $\mathcal{A}$ is perfect with the real scheme. So, the winning probability of $\mathcal{A}$ against PKEET-FA is same with the winning probability of $\mathcal{S}$ against Full-IBE.
Hence, the proof.
\end{proof}
		
	\begin{theorem}\label{thm:IND}
		The PKEET with parameters $(q,n,m,\sigma,\alpha)$ as in~\eqref{eq:params}  is $\INDa$ secure provided that $H'$ is a collision-resistant hash function, and Full-IBE of \cite{ABB10-EuroCrypt} is IND-CPA secure based on the hardness of $(\ZZ_q,n,\bar\Psi_\a)$-LWE assumption. 
	\end{theorem}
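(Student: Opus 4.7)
The plan is to prove IND-CCA security against Type-II adversaries by a game-hopping reduction to the IND-CPA security (equivalently, ciphertext pseudorandomness) of the Agrawal-Boneh-Boyen Full-IBE that underlies both halves $(\bf{c}_1,\bf{c}_3)$ and $(\bf{c}_2,\bf{c}_4)$ of our ciphertext. Because a Type-II adversary never holds a trapdoor for the target user $U_\theta$, each half can be treated as a fresh IBE ciphertext under the random ``identity'' $\bf{b}^*$, and the goal is to argue that both halves independently hide their plaintexts.

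Concretely, I would define three games. Game~0 is the real IND-CCA game. Game~1 replaces the challenge half $(\bf{c}_1^*,\bf{c}_3^*)$ by a uniform element of $\ZZ_q^t\times\ZZ_q^{2m}$; indistinguishability from Game~0 is reduced to IND-CPA of ABB10 instantiated on the public matrix $A$ with identity $\bf{b}^*$. Game~2 replaces $(\bf{c}_2^*,\bf{c}_4^*)$ in the same way, reduced to IND-CPA on the matrix $A'$ with identity $\bf{b}^*$. Since in Game~2 the challenge is uniform regardless of the bit $r$, $\mathcal{A}$'s advantage there vanishes, yielding the claimed bound up to the two IND-CPA losses.

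The main obstacle, as in every CCA-to-CPA reduction, is faithfully simulating $\mathcal{O}^{\Dec}(\theta,\cdot)$ without holding both trapdoors of the target user. In the Game~0--Game~1 reduction, the simulator self-generates $A'$ together with $T_{A'}$ via $\TrapGen$ while embedding the external ABB10 instance at $A$; thus on a query $\CT=(\bf{b},\bf{c}_1,\bf{c}_2,\bf{c}_3,\bf{c}_4)$ for $U_\theta$ it runs $\SampleLeft$ with $T_{A'}$ to recover $\bf{h}$ from $(\bf{c}_2,\bf{c}_4)$, queries the IBE key-extraction oracle for identity $\bf{b}$ to recover a candidate $\bf{m}'$ from $(\bf{c}_1,\bf{c}_3)$, and outputs $\bf{m}'$ iff $\bf{h}=H(\bf{m}')$. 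The symmetric argument handles the Game~1--Game~2 reduction with the roles of $A$ and $A'$ swapped. The only delicate case is a Phase-2 query whose identity component equals $\bf{b}^*$; here I would invoke collision-resistance of $H'$ (which ties $\bf{b}$ to the pair $(\bf{c}_1,\bf{c}_2)$) together with the min-entropy of the random choice of $\bf{b}^*$ to bound this event by a negligible probability, and let the simulator abort on it. Secret-key and $\mathcal{O}^{\Td_1}$, $\mathcal{O}^{\Td_2}$ queries on users $i\neq\theta$ are answered using trapdoors the simulator generates at setup for those users, so no information about $U_\theta$'s trapdoors ever leaks. Summing the two hybrid gaps then yields $\Adv_{\mathcal{A},\text{PKEET}}^{\INDa,\text{Type-II}}\leq 2\,\Adv^{\text{IND-CPA}}_{\text{ABB10}}+\mathrm{negl}(n)$, which is negligible by Theorem~\ref{thm:LWE}.
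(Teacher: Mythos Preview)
Your proposal follows the same core strategy as the paper: reduce to the IND-CPA security of the ABB10 Full-IBE by embedding the external IBE instance at the matrix~$A$ of the target user, self-generating $(A',T_{A'})$ via $\TrapGen$, and answering $\cal{O}^{\Dec}(\theta,\cdot)$ queries through IBE key-extraction on the identity~$\bf{b}$ carried by the queried ciphertext.  The paper, however, performs only a \emph{single} reduction: it forwards the IBE challenge into $(\bf{c}_1^*,\bf{c}_3^*)$ while computing $(\bf{c}_2^*,\bf{c}_4^*)$ honestly as an encryption of $H(\bf{m})$, and then directly relays $\cal{A}$'s output bit to the IBE challenger.  Your two-hybrid decomposition (Game~0$\to$Game~1 on $A$, then Game~1$\to$Game~2 on $A'$) is a genuinely more complete argument, since the scheme's $\INDa$ game is real-versus-random on the \emph{entire} ciphertext and the paper's single step only replaces half of it; your second hop is exactly what is needed to finish the job, at the cost of a factor~$2$ in the bound.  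You also make explicit the treatment of Phase-2 queries with $\bf{b}=\bf{b}^*$ via the collision-resistance of $H'$ (recall $\bf{b}=H'(\bf{c}_1,\bf{c}_2)$), which the paper handles only implicitly by writing ``$id^*\neq\bf{b}_\theta$''.  In short: same reduction skeleton, but your hybrid sequencing is the more rigorous of the two.
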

	
\begin{proof}
We show $\INDa$ security in the standard model. Let $\mathcal{A}$ be a PPT adversary that attacks the PKEET-FA scheme and has advantage $\epsilon$. We will construct an adversary $\mathcal{S}$ that attacks the Full-IBE of \cite{ABB10-EuroCrypt} by simulating the view of $\mathcal{A}$, and has advantage $\epsilon$.

The adversary $\mathcal{S}$ works as follows:\\
Let $\mathcal{A}$ choose $\mathcal{U}_{\theta}$ as target user.\\
 On input  $\PK=(A,A_1,\cdots,A_l,B,U)$ of Full-IBE, $\mathcal{S}$ append $A'$ in $\PK$, generates $T_{A'}$, sets the PKEET-FA instance for $\mathcal{U}_{\theta}$, and simulates the view of $\mathcal{A}$ for $\mathcal{U}_{\theta}$.\\
 For other user $\mathcal{U}_i$, $\mathcal{S}$ sets $\PK=(A_i,A'_i,A_1,\cdots,A_l,B_i,U_i)$, and generates $T_{A_i}$, $T_{A'_i}$ by $\TrapGen(q,n)$ to simulates the view of $\mathcal{A}$ according to the scheme.\\

 \begin{enumerate}
 \item $Query~ Phase:$
 \begin{itemize}
                \item $Secret~ Key~ Query$: On $\mathcal{A}$'s query for the secret key of $i(\neq \theta)$-th user,   $\mathcal{S}$ generates $T_{A_i}$, $T_{A'_i}$ by $\TrapGen(q,n)$ and return to $\mathcal{A}$.
		\item $Decryption~ Query$: 
		\begin{itemize}
		    \item On $\mathcal{A}$'s query on a pair of an index $i(\neq \theta)$ and a ciphertext $\CT_i$, $\mathcal{S}$ returns the output of $\Dec(\SK_i,\CT_i)$ using the secret key of the user $U_i$.
		    \item $\mathcal{A}$'s query on a pair of an index $\theta$ and a ciphertext $\CT_{\theta}$, $\mathcal{S}$ makes secret key query to Full-IBE oracle for secret key for the identity $\bf{b}_{\theta}$ and receives $\bf{e}_{\theta}$. $\mathcal{S}$ decrypt $\CT_{\theta}$ according to $\Dec$ of PKEET-FA and returns to $\mathcal{A}$.
		\end{itemize}
		\item $\Td_1~ Query$: On $\mathcal{A}$'s query on an index $i(\neq \theta)$, $\mathcal{S}$ returns $\td_{1,i}\gets\Td(\SK_i)$.
		\item $\Td_2~ Query$: On $\mathcal{A}$'s query on a pair of an index $i(\neq \theta)$ and a ciphertext $\CT_i$, $\mathcal{S}$ returns $\td_{2,i} \gets \Td_2(\SK_i, \CT_i)$.
		 
		 \end{itemize}
	\item $Challenge~ Phase:$ $\mathcal{S}$ receives $m$ from $\mathcal{A}$. $\mathcal{S}$ sends $m, id^*(\neq b_{\theta})$ to the challenger of Full-IBE and Encrypt $\bf{H(m)}$ i.e., $\bf{c}_2,\bf{c}_4$. Challenger of Full-IBE choose $r \gets_\$ \{0, 1\}$. 
	\begin{itemize}
	    \item If $r = 0$,  Challenger of Full-IBE computes $Encrypt(\PK, id^*, m) = (\bf{c}_1,\bf{c}_3)$, and sends $(\bf{c}_1,\bf{c}_3)$ to $\mathcal{S}$.
	    \item If $r = 1$,  Challenger of Full-IBE chose $(\bf{c}_1,\bf{c}_3) \gets_\$ \ZZ_q^{t} \times \ZZ_q^{2m}$, and sends $(\bf{c}_1,\bf{c}_3)$ to $\mathcal{S}$.
	\end{itemize}
	After getting reply, $\mathcal{S}$ will send $(id^*, \bf{c}_1,\bf{c}_2, \bf{c}_3,\bf{c}_4)$ to $\mathcal{A}$.
	\item $Decision~ Phase:$ 
	 $\mathcal{S}$ will receive $r'$ from $\mathcal{A}$ and passes to the challenger of Full-IBE.
\end{enumerate}

So, the simulation of view of $\mathcal{A}$ is perfect with the real scheme. So, the winning probability of $\mathcal{A}$ against PKEET-FA is same with the winning probability of $\mathcal{S}$ against Full-IBE.
Hence, the proof.
\end{proof}
	

\section{Instantiation from Lee et al. \cite{Lee2016}}\label{sec:construct Lee}
\begin{description}
	\item[$\Setup$($\lambda$)]$\\$ On input security parameter $\lambda$, and a maximum hierarchy depth $2$, set the parameters $q, n, m, {\bar \sigma}, {\bar \alpha}$. The vector ${\bar \sigma} ~\&~ {\bar \alpha} \in \mathbb{R}^2$ and we use $\sigma_l$ and $\alpha_l$ to refer to their $l$- th coordinate. 
	\begin{enumerate}
		\item Use algorithm $\TrapGen(q, n)$ to select uniformly random $n \times m$- matrices $A, A' \in \mathbb{Z}_q^{n \times m}$ with a basis $T_{A}, T_{A'}$ for $\L^{\perp}_q (A), \L^{\perp}_q (A')$, respectively. Repeat this Step until $A, A'$ has rank $n$.
		\item Select $3$ uniformly random $m \times m$ matrices $A_1, A_2,  B \in \mathbb{Z}_q^{n \times m}$.
		\item Select a uniformly random matrix $U\in\mathbb{Z}_q^{n\times t}$.
		\item We need some hash functions $H: \{0, 1\}^* \rightarrow\{0,1\}^t$,  $H_1: \{0, 1\}^* \rightarrow \{-1, 1\}^t$, $H_2 :\{0,1\}^*\to\ZZ_q$ and a full domain difference map $H' :\ZZ_q^n\to\ZZ_q^{n\times n}$ as in~{\cite[Section 5]{ABB10-EuroCrypt}}.
		\item Output the public key and the secret key
		$$\PK=(A,A_1,A_2,B,U)\quad,\quad\SK=(T_A, T_{A'}).$$
	\end{enumerate}

	\item[$\Enc(\PK,\bf{m})$] $\\$
	On input the public key $\PK$ and a message $\bf{m}\in\{0,1\}^t$ do
	\begin{enumerate}
		\item Choose uniformly random $\bf{s}_1,\bf{s}_2\in\ZZ_q^n$.
		\item Choose $\bf{x}_1,\bf{x}_2\in\overline{\Psi}_\alpha^t$ and compute
		\begin{align*}
		\bf{c}_1 &= U^T\bf{s}_1 +\bf{x}_1 +\bf{m}\big\lfloor\frac{q}{2}\big\rfloor\in\ZZ_q^t,\\
		\bf{c}_2 &= U^T\bf{s}_2 +\bf{x}_2 +H(\bf{m})\big\lfloor\frac{q}{2}\big\rfloor \in\ZZ_q^t.
		\end{align*}
		\item Select uniformly random $l-2$ matrices $A_3,\cdots,A_l\in\ZZ_q^{n\times m}$ and set
		$vk=A_3\|\cdots\|A_l$.
		\item Set $id := H_2(vk)\in\ZZ_q^n$.
		\item Build the following matrices in $\ZZ_q^{n\times 3m}$:
		\begin{align*}
		F_1 &= (A | A_1 + H'(0)\cdot B | A_2 + H'(id)\cdot B),\\
		F_2 &= (A | A_1 + H'(1)\cdot B | A_2 + H'(id)\cdot B).
		\end{align*}
		
		\item Choose a uniformly random $n\times 2m$ matrix $R$ in $\{-1,1\}^{n\times 2m}$.
		\item Choose $\bf{y}_1, \bf{y}_2\in\overline{\Psi}_\alpha^m$ and set $\bf{z}_1=R^T\bf{y}_1, \bf{z}_2=R^T\bf{y}_2\in\ZZ_q^{2m}$.
		\item Compute
		\begin{align*}
		\bf{c}_3&=F_1^T\bf{s}_1+[\bf{y}_1^T|\bf{z}_1^T]^T\in\ZZ_q^{3m},\\ \bf{c}_4&=F_2^T\bf{s}_2+[\bf{y}_2^T|\bf{z}_2^T]^T\in\ZZ_q^{3m}.
		\end{align*}
		\item Let $\bf{b}:= H_1(\bf{c}_1\|\bf{c}_2\|\bf{c}_3\|\bf{c}_4)\in\{-1,1\}^l$ and define a matrix
		$$F=(A'| B+\sum_{i=1}^lb_iA_i)\in\ZZ_q^{n\times 2m}.$$
		\item Extract a signature $\bf{e'}\in\ZZ^{2m\times t}$ by
		$$\bf{e'}\gets\SampleLeft(A',B+\sum_{i=1}^lb_iA_i,T_{A'},0,\sigma).$$
		Note that $F\cdot\bf{e'} =0\mod q$.
		\item Output the ciphertext 
		$$\CT=(vk,\bf{c}_1,\bf{c}_2,\bf{c}_3,\bf{c}_4,\bf{e'}).$$
	\end{enumerate}

	\item[$\Dec (\SK,\CT)$]$\\$ On input a secret key $\SK$ and a ciphertext $\CT$, do 
	\begin{enumerate}
		\item Parse the ciphertext $\CT$ into $$(vk,\bf{c}_1,\bf{c}_2,\bf{c}_3,\bf{c}_4,\bf{e}).$$
		\item Let $\bf{b}:= H_1(\bf{c}_1\|\bf{c}_2\|\bf{c}_3\|\bf{c}_4)\in\{-1,1\}^l$ and define a matrix
		$$F=(A'| B+\sum_{i=1}^lb_iA_i)\in\ZZ_q^{n\times 2m}.$$
		\item If $F\cdot\bf{e'}=0$ in $\ZZ_q$ and $\|\bf{e}\|\leq\sigma\sqrt{2m}$ then continue to Step 4; otherwise output $\perp$.
		\item Set $id := H_2(vk)\in\ZZ_q^n$ and build the following matrices in $\ZZ_q^{n\times 3m}$:
		\begin{align*}
		F_1 &= (A | A_1 + H'(0)\cdot B | A_2 + H'(id)\cdot B),\\
		F_2 &= (A | A_1 + H'(1)\cdot B | A_2 + H'(id)\cdot B).
		\end{align*}
		\item Use the secret key $\SK=T_A$ and the algorithm $\SampleLeft$ to sample matrices $\bf{e}_1,\bf{e}_2\in\ZZ_q^{3m\times t}$ such that $F_1\cdot\bf{e}_1=U$ and $F_2\cdot\bf{e}_2=U$ in $\ZZ_q^{n\times t}$.
		\item Compute $\bf{w}\gets\bf{c}_1-\bf{e}_1^T\bf{c}_3\in\ZZ_q^t$.
		\item For each $i=1,\cdots, t$, compare $w_i$ and $\lfloor\frac{q}{2}\rfloor$. If they are close, output $m_i=1$ and otherwise output $m_i=0$. We then obtain the message $\bf{m}$.
		\item Compute $\bf{w}'\gets\bf{c}_2-\bf{e}_2^T\bf{c}_4\in\ZZ_q^t$.
		\item For each $i=1,\cdots,t$, compare $w'_i$ and $\lfloor\frac{q}{2}\rfloor$. If they are close, output $h_i=1$ and otherwise output $h_i=0$. We then obtain the vector $\bf{h}$.
		\item If $\bf{h}=H(\bf{m})$  then output $\bf{m}$, otherwise output $\perp$.
	\end{enumerate}
	
	\item[$\Td(\SK_i)$]$\\$
	On input the secret key $\SK_i$ of a user $U_i$, run
	$$\td_i\gets\SampleBasisLeft(A_i,B_i+H'(1)\cdot A_{i,1})$$
	and returns the trapdoor $\td_i\in\ZZ_q^{2m\times 2m}$. Note that 
	$$(A_i\| B_i+H'(1)\cdot A_{i,1})\cdot\td_i =0\in\ZZ_q^{n\times 2m}.$$
	\item[$\Test(\td_i,\td_j,\CT_i,\CT_j)$]$\\$ On input trapdoors $\td_i,\td_j$ and ciphertexts $\CT_i,\CT_j$ of users $U_i$ and $U_j$ respectively, for $k=i,j$, do the following
	\begin{enumerate}
		\item Parse $\CT_k$ into 
		$$(vk_k,\bf{c}_{k,1},\bf{c}_{k,2},\bf{c}_{k,3},\bf{c}_{k,4},\bf{e}_k).$$
		\item Set $\bar A_k =(A_k|B_k+H'(1)\cdot A_{k,1})$. Sample $\bf{e}_k\in\ZZ_q^{3m\times t}$ from
		$$\SampleLeft(\bar A_k,B_k+H'(id_k)\cdot A_{k,2},\td_k,U,\sigma).$$
		\item Use $\bf{e}_k$ to decrypt $\bf{c}_{k,2}$, $\bf{c}_{k,4}$ as in Step 8-9 of $\Dec(\SK,\CT)$ above to obtain the hash value $\bf{h}_k$.
		\item If $\bf{h}_i=\bf{h}_j$ then ouput $1$; otherwise output $0$.
	\end{enumerate}
\end{description}

\begin{thm}[Correctness]
	The above PKEET is correct if the hash function $H$ is collision resistant.
\end{thm}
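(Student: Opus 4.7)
The plan is to verify the three conditions of the correctness definition in turn, mirroring the structure of the correctness argument given for the integer-lattice scheme in Section~\ref{sec:PKEET-FA}, and isolating collision resistance of $H$ as the one non-statistical ingredient.

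For decryption correctness, I would first observe that for an honestly generated ciphertext the signature $\bf{e'}$ returned by $\SampleLeft$ satisfies $F\cdot\bf{e'}=0\bmod q$ by construction, while $\|\bf{e'}\|\leq\sigma\sqrt{2m}$ with overwhelming probability by the standard Gaussian tail bound, so the verification in Step~3 of $\Dec$ passes. Using $T_A$, the algorithm $\SampleLeft$ then yields short $\bf{e}_1,\bf{e}_2\in\ZZ^{3m\times t}$ with $F_1\bf{e}_1=F_2\bf{e}_2=U$. Substituting the honest ciphertext expressions gives
$$\bf{w}=\bf{c}_1-\bf{e}_1^T\bf{c}_3=\bf{m}\lfloor q/2\rfloor+\bigl(\bf{x}_1-\bf{e}_1^T[\bf{y}_1^T|\bf{z}_1^T]^T\bigr),$$
and analogously $\bf{w}'=H(\bf{m})\lfloor q/2\rfloor+(\bf{x}_2-\bf{e}_2^T[\bf{y}_2^T|\bf{z}_2^T]^T)$. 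The parameter setting of Section~\ref{sec:params}, inherited from~\cite{ABB10-EuroCrypt}, ensures both error terms are bounded by $q/5$ with overwhelming probability, so rounding recovers $\bf{m}$ and $H(\bf{m})$ respectively; the final equality check $\bf{h}=H(\bf{m})$ then holds deterministically and $\Dec$ outputs $\bf{m}$.

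For the second condition, suppose $\CT_i$ and $\CT_j$ are honest encryptions of a common message $\bf{m}$. I would argue that the trapdoor $\td_k$, being a basis of $\Lp_q(A_k\|B_k+H'(1)\cdot A_{k,1})$, is a legitimate input to $\SampleLeft$ applied to the extended matrix ending in $B_k+H'(id_k)\cdot A_{k,2}$, producing a short $\bf{e}_k$ with $F_{k,2}\bf{e}_k=U$. By the same noise analysis as in the decryption argument, $\bf{h}_k$ recovered from $\bf{c}_{k,2},\bf{c}_{k,4}$ equals $H(\bf{m})$ with overwhelming probability, so $\bf{h}_i=\bf{h}_j$ and $\Test$ returns $1$.

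For the third condition, assume the two ciphertexts honestly decrypt under their respective secret keys to $\bf{m}_i\neq\bf{m}_j$, yet $\Test$ outputs $1$. By the analysis in the previous paragraph the values extracted inside $\Test$ agree with $H(\bf{m}_i)$ and $H(\bf{m}_j)$ except with negligible probability; hence $\bf{h}_i=\bf{h}_j$ forces $H(\bf{m}_i)=H(\bf{m}_j)$ on distinct inputs, which is a collision and therefore occurs only with negligible probability. The step I expect to be most delicate is precisely the one shared between conditions (2) and (3): verifying that the Gaussian width $\sigma$ of Section~\ref{sec:params} is still large enough for $\SampleLeft$ when the input trapdoor is the $2m\times 2m$ basis $\td_k$ (rather than $T_A$) and the target matrix has width $3m$, so that $\|\bf{e}_k\|$ remains small enough to keep $\bf{x}_2-\bf{e}_k^T[\bf{y}_2^T|\bf{z}_2^T]^T$ below $q/5$; this is where the Gram--Schmidt bound on $\td_k$ coming from $\SampleBasisLeft$ and the parameter inequalities of Theorem~\ref{thm:Gauss} must be invoked explicitly.
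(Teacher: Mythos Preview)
Your proposal is correct but takes a markedly different route from the paper. The paper's own proof is two lines: it observes that the construction instantiates the generic Lee et al.\ framework with the multi-bit HIBE and signature scheme of~\cite{ABB10-EuroCrypt}, so correctness of the underlying primitives is inherited from~\cite{ABB10-EuroCrypt} and the three correctness conditions then follow directly from~{\cite[Theorem 1]{Lee2016}}. In other words, the paper treats this as a black-box instantiation and defers all the noise-bound and collision-resistance reasoning to the cited works.

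You instead unpack everything by hand: verifying the signature check, bounding the decryption error terms, arguing that $\td_k$ serves as a valid $\SampleLeft$ trapdoor for the extended matrix $F_{k,2}$, and isolating collision resistance of $H$ as the sole source of soundness in condition~(3). This is a legitimate self-contained argument and has the virtue of exposing the one genuinely delicate point---that the Gram--Schmidt norm of the delegated basis $\td_k$ produced by $\SampleBasisLeft$ must still sit below $\sigma/\omega(\sqrt{\log 3m})$ so that the $3m$-dimensional $\SampleLeft$ call inside $\Test$ returns vectors short enough for decryption to succeed. The paper simply absorbs this into the appeal to~\cite{ABB10-EuroCrypt,Lee2016}; your version makes the dependence explicit. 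One minor slip: the relevant parameter block for this section is~\eqref{eq:params-HIBE}, not Section~\ref{sec:params}, though the two are essentially identical.
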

\begin{proof}
	Since we employ the multi-bit HIBE and signature scheme from~\cite{ABB10-EuroCrypt}, their correctness follow from~\cite{ABB10-EuroCrypt}. The Theorem follows from~{\cite[Theorem 1]{Lee2016}}.
\end{proof}

\subsubsection*{Parameters}
We follow~{\cite[Section 8.3]{ABB10-EuroCrypt}} for choosing parameters for our scheme. Now for the system to work correctly we need to ensure
\begin{itemize}
	\item the error term in decryption is less than ${q}/{5}$ with high probability, i.e., $q=\Omega(\sigma m^{3/2})$ and $\alpha<[\sigma lm\omega(\sqrt{\log m})]^{-1}$,
	\item that the $\TrapGen$ can operate, i.e., $m>6n\log q$,
	\item that $\sigma$ is large enough for $\SampleLeft$ and $\SampleRight$, i.e.,
	$\sigma>lm\omega(\sqrt{\log m})$,
	\item that Regev's reduction applies, i.e., $q>2\sqrt{n}/\alpha$,
\end{itemize}
Hence the following choice of parameters $(q,m,\sigma,\alpha)$ from \cite{ABB10-EuroCrypt} satisfies all of the above conditions, taking $n$ to be the security parameter:
\begin{equation}\label{eq:params-HIBE}
\begin{aligned}
& m=6n^{1+\delta}\quad,\quad q=\max(2Q,m^{2.5}\omega(\sqrt{\log n})) \\
& \sigma = ml\omega(\sqrt{\log n})\quad,\quad\alpha=[l^2m^2\omega(\sqrt{\log n})]
\end{aligned}
\end{equation}
and round up $m$ to the nearest larger integer and $q$ to the nearest larger prime. Here we assume that $\delta$ is such that $n^\delta>\lceil\log q\rceil=O(\log n)$.

\begin{thm}
	The PKEET constructed in Section~\ref{sec:construct Lee} with paramaters as in~\eqref{eq:params-HIBE} is $\IND$ secure provided that $H_1$ is collision resistant. 
\end{thm}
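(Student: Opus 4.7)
The plan is to reduce the $\INDa$ security of the construction in Section~\ref{sec:construct Lee} to the $\INDCPA$ security of the depth-$2$ multi-bit HIBE of~\cite{ABB10-EuroCrypt} via a short chain of hybrids. The two extra ingredients that allow the decryption oracle to be simulated without the master trapdoor $T_A$ are the collision resistance of $H_1$ and the strong unforgeability of the $\SampleLeft$-based signature $\bf{e'}$, the latter reducing to $\mathsf{SIS}$ in $\Lp_q(F)$.

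First I would construct a simulator $\mathcal{S}$ that, on input HIBE public parameters $(A, A_1, A_2, B, U)$, generates $(A', T_{A'}) \gets \TrapGen(q,n)$ on its own and publishes $\PK = (A, A_1, A_2, B, U, A')$ to the $\INDa$ adversary $\mathcal{A}$. The simulator chooses a verification key $vk^{*}$ at the outset and earmarks $id^{*} := H_2(vk^{*})$ as the HIBE target identity; this is admissible because the HIBE of~\cite{ABB10-EuroCrypt} is adaptively secure. For non-target users $U_i \ne U_\theta$, $\mathcal{S}$ generates $(T_{A_i}, T_{A'_i})$ itself via $\TrapGen$ and answers secret-key, decryption, and Type-$1/2$ trapdoor queries for those users directly.

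Second, $\mathcal{S}$ handles a decryption query $(vk, \bf{c}_1, \bf{c}_2, \bf{c}_3, \bf{c}_4, \bf{e'})$ addressed to $U_\theta$ by first checking $F \cdot \bf{e'} = 0$ and $\|\bf{e'}\| \le \sigma\sqrt{2m}$, and then splitting on $vk$. If $vk \ne vk^{*}$ then $id = H_2(vk) \ne id^{*}$, and $\mathcal{S}$ obtains HIBE secret keys for the identities $(H'(0), H'(id))$ and $(H'(1), H'(id))$ from its HIBE oracle, then decrypts exactly as in $\Dec$. The case $vk = vk^{*}$ in the post-challenge phase is ruled out per-query: since the query must differ from the challenge, either $(\bf{c}_1,\ldots,\bf{c}_4) \ne (\bf{c}_1^{*},\ldots,\bf{c}_4^{*})$ and $\bf{b} = H_1(\bf{c}_1\|\bf{c}_2\|\bf{c}_3\|\bf{c}_4) = \bf{b}^{*}$ yields a collision for $H_1$, or $(\bf{c}_1,\ldots,\bf{c}_4) = (\bf{c}_1^{*},\ldots,\bf{c}_4^{*})$ and $\bf{e'} \ne \bf{e'}^{*}$, in which case $\bf{e'} - \bf{e'}^{*}$ is a short nonzero element of $\Lp_q(F^{*})$ and breaks $\mathsf{SIS}$. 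Pre-challenge queries happen to hit $vk = vk^{*}$ with only negligible probability since $vk^{*}$ has large min-entropy and is unknown to $\mathcal{A}$.

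Finally I would use two HIBE $\INDCPA$ hybrids to randomize the challenge ciphertext. In the first hybrid, $\mathcal{S}$ submits $H(\bf{m})$ and identity $(H'(1), H'(id^{*}))$ to the HIBE challenger to receive $(\bf{c}_2^{*}, \bf{c}_4^{*})$ either as a real encryption or as uniform, while $(\bf{c}_1^{*}, \bf{c}_3^{*})$ and $\bf{e'}^{*}$ are computed honestly (the latter with $T_{A'}$). In the second hybrid, $\mathcal{S}$ likewise randomizes $(\bf{c}_1^{*}, \bf{c}_3^{*})$ through a HIBE challenge at $(H'(0), H'(id^{*}))$. After both hops the challenge ciphertext is distributed as in the $r=1$ branch of the $\INDa$ experiment, so $\mathcal{A}$'s remaining advantage is $0$ and the total reduction loss is twice the HIBE $\INDCPA$ advantage plus negligible collision-resistance and $\mathsf{SIS}$ terms. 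The main obstacle is precisely the decryption-oracle simulation described just above: one must verify that in each hybrid no admissible query forces $\mathcal{S}$ to request a HIBE key at the current target identity, which is exactly what the case analysis on $vk$ guarantees.
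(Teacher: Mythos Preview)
Your reduction is essentially correct and is precisely the argument that underlies the paper's proof, but the paper takes a much shorter route: it simply observes that the HIBE of~\cite{ABB10-EuroCrypt} is $\textsf{IND-sID-CPA}$ secure (their Theorem~33), that the $\SampleLeft$-based signature is strongly unforgeable (their Section~7.5), and then invokes the generic PKEET theorem of Lee et al.~\cite[Theorem 2]{Lee2016}. What you have written is, in effect, an unpacking of that generic theorem for this particular instantiation: the two HIBE hybrids at identities $(H'(0),H'(id^*))$ and $(H'(1),H'(id^*))$, the use of HIBE key-extraction to simulate $\cal{O}^{\Dec}$ whenever $vk\neq vk^*$, and the $H_1$-collision/forgery case split when $vk=vk^*$ are exactly the ingredients of the Lee et al.\ reduction. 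Your version has the advantage of being self-contained and of making the loss (two HIBE hybrids plus negligible collision and forgery terms) explicit; the paper's version has the advantage of modularity.

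One small correction: you justify committing to $vk^*$ up front by saying the HIBE of~\cite{ABB10-EuroCrypt} is adaptively secure, but the depth-$2$ HIBE used here is only $\textsf{IND-sID-CPA}$ secure (selective). This does not harm your argument, because $vk^*$ is sampled uniformly by the simulator and is independent of $\cal{A}$'s view, so you can equally well sample $vk^*$ and declare $id^*=H_2(vk^*)$ \emph{before} receiving the HIBE public parameters; selective security then suffices. Relatedly, in your single simulator you have $\cal{S}$ generate $(A',T_{A'})$ itself, which is fine for the HIBE hybrids but means that in the ``$\bf{e'}\neq\bf{e'}^*$'' branch you are not actually solving an external $\mathsf{SIS}$ instance; as you implicitly acknowledge, that branch should be handled by a separate reduction in which $A'$ is the $\mathsf{SIS}$ challenge (equivalently, by invoking strong unforgeability as a black box, which is what the paper does).
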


\begin{proof}
	The HIBE is $\textsf{IND-sID-CPA}$ secure by~{\cite[Theorem 33]{ABB10-EuroCrypt}} and the signature is strongly unforgeable by~{\cite[Section 7.5]{ABB10-EuroCrypt}}. The result follows from {\cite[Theorem 2]{Lee2016}}.
\end{proof}

\begin{thm}[{\cite[Theorem 3]{Lee2016}}]
	The PKEET with parameters $(q,n,m,\sigma,\alpha)$ as in~\eqref{eq:params-HIBE} is $\OW$ provided that $H$ is one-way and  $H_1$ is collision resistant.
\end{thm}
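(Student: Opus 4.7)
The plan is simply to invoke the generic reduction of~{\cite[Theorem 3]{Lee2016}}, which turns a secure 2-level HIBE together with a strongly unforgeable one-time signature into an $\OW$-secure PKEET (under one-wayness of $H$ and collision resistance of $H_1$), and to verify that each lattice ingredient of the construction in Section~\ref{sec:construct Lee} meets the corresponding hypothesis. So the proof reduces to a short checklist, exactly parallel to the IND theorem just above.

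First I would identify the HIBE ingredient: the matrices $A, A_1, A_2, B$ together with the full-rank-difference encoding $H'$ form precisely the multi-bit 2-level HIBE of Agrawal, Boneh and Boyen, which is \textsf{IND-sID-CPA} secure under the $(\ZZ_q,n,\bar\Psi_\a)$-LWE assumption by~{\cite[Theorem 33]{ABB10-EuroCrypt}}; the choice of $(q,n,m,\sigma,\alpha)$ in~\eqref{eq:params-HIBE} is exactly the parameter regime in which that theorem applies. Next I would identify the signature ingredient: the vector $\bf{e}'$ produced in Step 10 of $\Enc$ via $\SampleLeft$ on $(A'\mid B+\sum_i b_iA_i)$ is the ABB10 one-time signature, which is strongly unforgeable under LWE by~{\cite[Section 7.5]{ABB10-EuroCrypt}}. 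The one-wayness of $H$ and collision resistance of $H_1$ are assumed in the statement. Plugging these four facts into~{\cite[Theorem 3]{Lee2016}} yields $\OW$ security.

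The delicate point, and the one I would expect to take the most care, is the Type-I branch of the reduction. A Type-I adversary holds a trapdoor for the target user and can therefore equality-test the challenge $\CT^*_\theta$ against any self-made ciphertext. Lee et al.'s game hop isolates this capability: with the trapdoor the adversary can extract $H(\bf{m})$ from $(\bf{c}_2^*,\bf{c}_4^*)$, but recovering $\bf{m}$ itself reduces either to inverting $H$ on that value or to breaking the OW-CPA security of the first HIBE encryption $(\bf{c}_1^*,\bf{c}_3^*)$; the former is ruled out by the one-wayness of $H$ together with the high-min-entropy requirement on the message space (stated in the security model), while the latter is implied by the \textsf{IND-sID-CPA} security of the ABB10 HIBE. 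Decryption queries on mauled challenges are handled in the standard way: if the adversary re-uses the target verification key $vk^*$, then a fresh valid $\bf{e}'$ would contradict strong unforgeability of the signature, and otherwise $H_2(vk)\ne H_2(vk^*)$ by collision resistance, so the derived identity is new and the simulator can answer the query using a HIBE user-key extraction.

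With this bookkeeping verified, nothing remains to prove: the conclusion is immediate from~{\cite[Theorem 3]{Lee2016}}. The only real labour is matching the concrete notation of Section~\ref{sec:construct Lee} (the roles of $H_1$, $H_2$, $H'$, $vk$, and the two parallel encryption tracks) to the abstract interface of Lee et al.'s generic composition, which proceeds along the same lines as the argument already given for Theorem~\ref{thm:IND}.
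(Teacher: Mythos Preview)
Your proposal is correct and follows essentially the same approach as the paper's own proof: verify that the ABB10 HIBE is \textsf{IND-sID-CPA} secure (\cite[Theorem 33]{ABB10-EuroCrypt}) and that the ABB10 signature is strongly unforgeable (\cite[Section 7.5]{ABB10-EuroCrypt}), then invoke the generic result \cite[Theorem 3]{Lee2016}. The paper's argument is literally those two citations plus the invocation; your additional paragraph unpacking the Type-I branch of Lee et al.'s reduction is extra exposition rather than a different route.
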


\begin{proof}
	The HIBE is $\textsf{IND-sID-CPA}$ secure by~{\cite[Theorem 33]{ABB10-EuroCrypt}} and the signature is strongly unforgeable by~{\cite[Section 7.5]{ABB10-EuroCrypt}}. The result follows from {\cite[Theorem 3]{Lee2016}}.
\end{proof}


\section{PKEET-FA over Ideal Lattices}\label{sec:construction-ideal}
The parameters of the scheme are $n,m,q,k$ and $\sigma, \alpha, \gamma, \tau$ and $\zeta$ are real numbers, and chosen as in Section~\ref{sec:Params-ideal}.
\subsubsection*{Construction}\label{sec:construction}
\begin{description}
	\item[$\Setup(1^n)$] $\\$
	On input the security parameter $1^n$, do the following:
	\begin{enumerate}
		\item Compute $\bf{a}\in R_q^m$ associated to its trapdoor $T_\bf{a}\in R^{(m-k)\times k}$, $(\bf{a},T_\bf{a})\gets\TrapGen(q,\sigma,h=0)$, i.e., 
		$\bf{a}=((\bf{a}')^{T} | -(\bf{a}')^{T}T_\bf{a})^T$
		\item Compute $\bf{b}\in R_q^m$ associated to its trapdoor $T_\bf{b}\in R^{(m-k)\times k}$, $(\bf{b},T_\bf{b})\gets\TrapGen(q,\sigma,h=0)$, i.e., 
		$\bf{b}=((\bf{b}')^{T} | -(\bf{b}')^{T}T_\bf{b})^T$
		\item Sample uniformly random $u\hookleftarrow U(R_q)$.
		\item Output $\PK = (\bf{a},\bf{b},u)\in R_q^{2m+1}$ and $\SK=(T_\bf{a},T_\bf{b})$
	\end{enumerate} 
	
	\item[$\Enc(\PK,M)$] $\\$
	 Given a message $M\in R_2$, do the following
	\begin{enumerate}
		\item Sample $s_1,s_2\hookleftarrow U(R_q)$, $e'_1,e'_2\hookleftarrow D_{R,\tau}$ and compute
		\begin{align*}
		\CT_1 &= u\cdot s_1+e'_1 +M\cdot\lfloor q/2\rfloor\in R_q,\\
	\CT_2&=u\cdot s_2 +e'_2 + H'(M)\cdot\lfloor q/2\rfloor\in R_q.
		\end{align*}
		
		where $H'$ is a hash function mapping from $\{ 0,1\}^*$ to the message space $\cal{M}$.
		\item Choose a random $v\in\ZZ_q^n$ and compute ${h}=H(v)\in R_q$.
		\item Compute $\bf{a}_h = \bf{a}^T + (\bf{0} | h\bf{g})^T = ((\bf{a}')^T | h\bf{g}-(\bf{a}')^TT_\bf{a})^T$.
		\item Compute $\bf{b}_h = \bf{b}^T + (\bf{0} | h\bf{g})^T = ((\bf{b}')^T | h\bf{g}-(\bf{b}')^TT_\bf{b})^T$.
		\item Choose $\bf{y},\bf{y}'\hookleftarrow D_{R^{m-k},\tau}, \bf{z}, \bf{z}'\in D_{R^k,\gamma}$ and compute
		\begin{align*}
		\CT_3 &= \bf{a}_h\cdot s_1 +(\bf{y}^T,\bf{z}^T)^T\in R_q^m,\\
		\CT_4 &= \bf{b}_h\cdot s_2 +((\bf{y}')^T,(\bf{z}')^T)^T\in R_q^m.
		\end{align*}
		Output the ciphertext
		$$\CT=(v,\CT_1,\CT_2,\CT_3,\CT_4)\in R_q^{2m+3}.$$
	\end{enumerate}
	\item[$\Dec(\SK,\CT)$] $\\$
	On input the secret key $\SK=(T_\bf{a},T_\bf{b})$ and a ciphertext $\CT=(v,\CT_1,\CT_2,\CT_3,\CT_4)$, do the following:
	\begin{enumerate}
		\item Compute $h=H(v)$ and construct $\bf{a}_h$ and $\bf{b}_h$ as in Step 3 and 4 in the Encryption process.
		\item Sample short vectors $\bf{x},\bf{x}'\in R_q^m$:
		\begin{align*}
		\bf{x} &\gets\SamplePre(T_\bf{a},\bf{a}_h,h,\zeta,\sigma,\alpha,u) \\
		\bf{x}' &\gets\SamplePre(T_\bf{b},\bf{b}_h,h,\zeta,\sigma,\alpha,u)
		\end{align*}
		\item Compute $\bf{w}=\CT_1-\CT_3^T\bf{x}\in R_q$.
		\item For each $w_i$, if it is closer to $\lfloor q/2\rfloor$ than to $0$, then output $M_i=1$, otherwise $M_i=0$. Then we obtain the message $M$.
		\item Compute $\bf{w}'=\CT_2-\CT_4^T\bf{x}'\in R_q$.
		\item For each $w'_i$, if it is closer to $\lfloor q/2\rfloor$ than to $0$, then output $\bf{h}_i=1$, otherwise $\bf{h}_i=0$. Then we obtain an element $\bf{h}$.
		\item If $\bf{h} = H'(M)$ then output $M$, otherwise output $\perp$.
	\end{enumerate}

Let $U_i$ and $U_j$ be two users of the system. We denote by $\CT_i = (\CT_{i,1},\CT_{i,2},\CT_{i,3},\CT_{i,4})$ (resp. $\CT_j = (\CT_{j,1},\CT_{j,2},\CT_{j,3},\CT_{j,4})$) be a ciphertext of $U_i$ (resp. $U_j$).\\

\item[Type-1 Authorization]~
\begin{itemize}
	\item $\Td_1(\SK_i)$: 
	On input a user $U_i$'s secret key $\SK_i = (T_{i,\bf{a}}, T_{i,\bf{b}})$, it outputs a trapdoor $\td_{1,i}=T_{i,\bf{b}}$.
	
	\item $\Test(\td_{1,i},\td_{1,j},\CT_i,\CT_j)$: 
	On input trapdoors $\td_{1,i},\td_{1,j}$ and ciphertexts $\CT_i,\CT_j$ for users $U_i, U_j$ respectively, computes
	\begin{enumerate}
		\item For each $i$ (resp. $j$), do the following
		\begin{enumerate}
			\item Compute $h_i=H(v_i)$ and sample $\bf{x'}_i\in R_q^m$ from
			$$\bf{x'}_i\gets\SamplePre(T_{i,\bf{b}},\bf{b}_h,h_i,\zeta,\sigma,\alpha,u).$$
			\item Compute $\bf{w}_i=\CT_{i,2}-\CT_{i,4}^T\bf{x'}_i$. For each $k=1,\cdots,n$, if $w_{i,k}$ is closer to  $\lfloor q/2\rfloor$ than to $0$, then output $\bf{h}_{ik}=1$, otherwise $\bf{h}_{ik}=0$. Then we obtain the element $\bf{h}_i$ (resp. $\bf{h}_j$).
			
		\end{enumerate}
		\item Output $1$ if $\bf{h}_i=\bf{h}_j$, and $0$ otherwise.
	\end{enumerate}
\end{itemize}

\item[Type-2 Authorization]~
\begin{itemize}
	\item $\Td_2(\SK_i,\CT_i)$: 
	On input a user $U_i$'s secret key $\SK_i = (T_{i,\bf{a}}, T_{i,\bf{b}})$ and ciphertext $\CT_i=(v_i,\CT_{i,1},\CT_{i,2},\CT_{i,3},\CT_{i,4})$, it samples $\bf{x'}_i\in R_q^m$ from
	$$\bf{x'}_i\gets\SamplePre(T_{i,\bf{b}},\bf{b}_h,h_i,\zeta,\sigma,\alpha,u).$$
	it outputs a trapdoor $\td_{2,i}=\bf{x'}_i$.
	
	\item $\Test(\td_{2,i},\td_{2,j},\CT_i,\CT_j)$: 
	On input trapdoors $\td_{2,i},\td_{2,j}$ and ciphertexts $\CT_i,\CT_j$ for users $U_i, U_j$ respectively, do the following
	\begin{enumerate}
		 \item Compute $\bf{w}_i=\CT_{i,2}-\CT_{i,4}^T\bf{x'}_i$. For each $k=1,\cdots,n$, if $w_{i,k}$ is closer to  $\lfloor q/2\rfloor$ than to $0$, then output $\bf{h}_{ik}=1$, otherwise $\bf{h}_{ik}=0$. Then we obtain the element $\bf{h}_i$ (resp. $\bf{h}_j$).
		\item Output $1$ if $\bf{h}_i=\bf{h}_j$, and $0$ otherwise.
	\end{enumerate}
\end{itemize}

\item[Type-3 Authorization]~
\begin{itemize}
	\item $\Td_{3,i}(\SK_i,\CT_i)$: 
	On input a user $U_i$'s secret key $\SK_i = (T_{i,\bf{a}}, T_{i,\bf{b}})$ and ciphertext $\CT_i$, it outputs a trapdoor $\td_{3,i}=\bf{x'}_i$ by sampling 
	$$\bf{x'}_i\gets\SamplePre(T_{i,\bf{b}},\bf{b}_h,h_i,\zeta,\sigma,\alpha,u).$$
	\item $\Td_{3,j}(\SK_j)$: On input a user $U_j$'s secret key $\SK_j = (T_{j,\bf{a}}, T_{j,\bf{b}})$, it outputs a trapdoor $\td_{3,j}=T_{j,\bf{b}}$.
	\item $\Test(\td_{i,3}\td_{j,3},\CT_i,\CT_j)$: 
	On input trapdoors $\td_{3,i},\td_{j,3}$ and ciphertexts $\CT_i,\CT_j$ for users $U_i, U_j$ respectively, do the following
	\begin{enumerate}
		\item Compute $\bf{w}_i=\CT_{i,2}-\CT_{i,4}^T\bf{x'}_i$. For each $k=1,\cdots,n$, if $w_{i,k}$ is closer to  $\lfloor q/2\rfloor$ than to $0$, then output $\bf{h}_{ik}=1$, otherwise $\bf{h}_{ik}=0$. Then we obtain the element $\bf{h}_i$.
			\item Compute $h_j=H(v_j)$ and sample $\bf{x'}_j\in R_q^m$ from
			$$\bf{x'}_j\gets\SamplePre(T_{j,\bf{b}},\bf{b}_h,h_j,\zeta,\sigma,\alpha,u).$$
			\item Compute $\bf{w}_j=\CT_{j,2}-\CT_{j,4}^T\bf{x'}_j$. For each $k=1,\cdots,n$, if $w_{j,k}$ is closer to  $\lfloor q/2\rfloor$ than to $0$, then output $\bf{h}_{jk}=1$, otherwise $\bf{h}_{jk}=0$. Then we obtain the element $\bf{h}_j$.

		\item Output $1$ if $\bf{h}_i=\bf{h}_j$, and $0$ otherwise.
	\end{enumerate}
\end{itemize}
\end{description}

\begin{lem}[Correctness]
	With the choice of parameters as in~\ref{sec:Params-ideal}, our proposed PKEET  is correct,  assuming that the hash function $H'$ is collision-resitant.
\end{lem}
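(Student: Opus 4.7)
The plan is to verify the three standard PKEET-FA correctness conditions in turn: (i) decryption recovers $M$, (ii) for any two ciphertexts encrypting the same message, every $\Test_\alpha$ returns $1$, and (iii) for ciphertexts of distinct messages, $\Test_\alpha$ returns $1$ only with negligible probability. The structural backbone is the same algebraic identity that drives the underlying IBE of \cite{ABB10-EuroCrypt} in its ring form from \cite{BertFRS18-implement}; the novelty is that we use it \emph{twice}, once on the $(\CT_1,\CT_3)$ block to recover $M$ and once on the $(\CT_2,\CT_4)$ block to recover a hash value $\bf{h}$, and then appeal to collision-resistance of $H'$ to decide equality.

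For (i) I would first expand
\[
\bf{w} \;=\; \CT_1 - \CT_3^T \bf{x} \;=\; M\lfloor q/2\rfloor + e'_1 - (\bf{y}^T,\bf{z}^T)\,\bf{x},
\]
using the defining property $\bf{a}_h^T\bf{x}=u$ of the preimage produced by $\SamplePre$. The coefficient-wise rounding to $\lfloor q/2\rfloor$ or $0$ is correct provided the error polynomial $e'_1 - (\bf{y}^T,\bf{z}^T)\bf{x}$ has every coefficient below $q/4$. I would bound this by the standard tail inequalities for discrete Gaussians: $e'_1$ is controlled by $\tau\sqrt{n}$, $\bf{y}$ by $\tau\sqrt{(m-k)n}$, $\bf{z}$ by $\gamma\sqrt{kn}$, and $\bf{x}$ by $\zeta\sqrt{mn}$, after which a Cauchy--Schwarz-type estimate in $R$ (using that coefficient products of two polynomials of degree $<n$ inflate the norm by at most $\sqrt{n}$) gives an overall bound of the form $O(\tau\zeta m\sqrt{n})+O(\gamma\zeta\sqrt{km\,n})$. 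The parameter choices of Section~\ref{sec:Params-ideal} are precisely those guaranteeing this quantity is below $q/4$ with overwhelming probability, so $\bf{w}$ decodes to $M$. Running the same argument on $(\CT_2,\CT_4)$ with the trapdoor $T_{\bf{b}}$ and the same Gaussian parameters yields $\bf{h}=H'(M)$, so the final check $\bf{h}=H'(M)$ succeeds and $\Dec$ outputs $M$.

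For (ii), observe that in all three authorization types the $\Test$ algorithm merely reruns the $(\CT_2,\CT_4)$ half of decryption for each user, using either the ciphertext-specific preimage $\bf{x}'_i$ (Types~2 and the $i$-side of Type~3) or a preimage freshly sampled from the trapdoor $T_{i,\bf{b}}$ supplied as $\td_{1,i}$ or $\td_{3,j}$ (Types~1 and the $j$-side of Type~3). In every case $\bf{b}_{h_i}^T \bf{x}'_i = u$ holds by construction of $\SamplePre$, so the same calculation as in (i) gives $\bf{h}_i = H'(M_i)$ and $\bf{h}_j = H'(M_j)$. If $M_i = M_j$ then $H'(M_i) = H'(M_j)$, hence $\bf{h}_i = \bf{h}_j$ and $\Test$ outputs $1$.

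For (iii), if $M_i \neq M_j$ but $\Test$ nonetheless returns $1$, then the above argument produces $H'(M_i) = \bf{h}_i = \bf{h}_j = H'(M_j)$, i.e.\ a collision for $H'$; by the collision-resistance assumption this occurs with negligible probability. The main obstacle I anticipate is the parameter-bookkeeping in (i): the preimage $\bf{x}$ from $\SamplePre$ is Gaussian of width $\zeta$ and is \emph{not} independent of $(\bf{y},\bf{z})$ as distributions over $R$, so care is needed to justify the coefficient-wise inner-product bound for $(\bf{y}^T,\bf{z}^T)\bf{x}$ in $R_q$ (as opposed to over $\ZZ$); the standard route is to invoke sub-Gaussian concentration on each ring coordinate and union-bound over the $n$ coefficients and the $t$ message bits, which the parameters of Section~\ref{sec:Params-ideal} are designed to absorb.
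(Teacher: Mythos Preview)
Your proposal is correct and follows the same line as the paper's proof, which is itself a two-sentence sketch: identify the decryption error term $e'_1-\bf{y}^T\bf{x}_0-\bf{z}^T\bf{x}_1$, note it is below $\lfloor q/4\rfloor$ by the parameter constraints of Section~\ref{sec:Params-ideal}, and invoke collision-resistance of $H'$ for the $\Test$ procedure. One small clarification: your anticipated obstacle about $\bf{x}$ not being independent of $(\bf{y},\bf{z})$ is not actually an issue here, since $\bf{x}$ is sampled during decryption from the trapdoor and the tag $h$ alone and is therefore independent of the encryption randomness $(\bf{y},\bf{z})$.
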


\begin{proof}
	Let $\bf{x}=(\bf{x}_0^T|\bf{x}_1^T)^T$ with $\bf{x}_0\in R_q^{m-k}$ and $\bf{x}_1\in R_q^k$. To correctly decrypt a ciphertext, we need the error term $e_1'-(\bf{y}^T|\bf{z}^T)(\bf{x}_0^T|\bf{x}_1^T)^T = e_1'-\bf{y}^T\bf{x}_0-\bf{z}^T\bf{x}_1$ to be bounded by $\lfloor q/4\rfloor$, which is satisfied by the choice of parameters in~\ref{sec:Params-ideal}. Similarly, for the test procedure, one needs to correctly decrypt $H'(M)$ and the equality test works correctly given that $H'$ is collision-resistant.
\end{proof}

\subsubsection*{Security analysis}
In this section, we will prove that our proposed scheme is $\OWa$ secure against Type-I adversaries (cf.~Theorem~\ref{thm:OWCPA}) and $\INDa$ secure against Type-II adversaries (cf. Theorem~\ref{thm:INDCPA}).

\begin{thm}[$\OWa$]\label{thm:OWCPA}
	The proposed PKEET scheme with parameter $(q,n,m,\sigma,\alpha)$ as in Section~\ref{sec:Params-ideal} is $\OWa$ secure provided that $H'$ is one-way hash function, $H$ is a collision resistant hash function  and the $\RLWE$ problem is hard. In particular, suppose there exists a probabilistic algorithm $\cal{A}$ that wins the $\OWa$ game with advantage $\epsilon$ then there is a probabilistic algorithm $\cal{B}$ that solves that $\RLWE$ problem with advantage $\epsilon'$ such that
	$$\epsilon'\geq \epsilon -\epsilon_{H',\mathsf{OW}}-\epsilon_{H,\mathsf{CR}}$$
	where $\epsilon_{H',\mathsf{OW}}$ and $\epsilon_{H,\mathsf{CR}}$ are the advantage of breaking the one-wayness of $H'$ and the collision resistance of $H$ respectively.
\end{thm}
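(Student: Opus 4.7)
The plan is to reduce $\OWa$ security to $\RLWE$ via an Agrawal--Boneh--Boyen--style simulation adapted to the MP12/GM18 ring-trapdoor framework, losing only the one-wayness advantage of $H'$ and the collision-resistance advantage of $H$. The key structural observation is that, of the two trapdoors in $\SK=(T_{\bf{a}},T_{\bf{b}})$, the Type-I adversary is entitled to $T_{\bf{b}}$ via a Type-1 authorization query, so the simulator $\cal{B}$ must generate $\bf{b}$ honestly through $\TrapGen$ and surrender $T_{\bf{b}}$ to $\cal{A}$. All remaining secrecy of the challenge message must therefore be carried by the $(\bf{a},u)$ half of the target public key, and this is precisely where $\cal{B}$ embeds its $\RLWE$ instance, comprising samples $(\bf{a}',\bf{c}_{\bf{a}})$ and $(u,c_u)$ sharing a common secret.

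Concretely, $\cal{B}$ commits in advance to the randomness $v^{*}\hookleftarrow U(\ZZ_q^n)$ that will be used in the challenge ciphertext, sets $h^{*}=H(v^{*})$, samples $\tilde T\hookleftarrow D_{R^{(m-k)\times k},\sigma}$, and programs $\bf{a}=\bigl((\bf{a}')^{T}\,\big|\,-h^{*}\bf{g}-(\bf{a}')^{T}\tilde T\bigr)^{T}$ from the first component of the $\RLWE$ sample. By construction $\bf{a}_{H(v)}^{T}\bigl(\tilde T^{T}\,\big|\,I_k\bigr)^{T}=(H(v)-h^{*})\bf{g}^{T}$, so the FRD property of $H$ makes this tag invertible for every $v\neq v^{*}$ and turns $\tilde T$ into a valid $\bf{g}$-trapdoor for $\bf{a}_{H(v)}$. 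With this, $\cal{B}$ answers a $\Dec$ query $(\CT,v)$ on the target user by using $T_{\bf{b}}$ to recover $H'(\bf{m})$ from $(\CT_2,\CT_4)$ and invoking $\SamplePre$ with $\tilde T$ on the invertible tag $H(v)-h^{*}$ to recover $\bf{m}$ from $(\CT_1,\CT_3)$; it aborts only if a decryption query uses $v=v^{*}$, which occurs with negligible probability since $v^{*}$ is uniform and independent of $\cal{A}$'s view before the challenge, and collision-resistance of $H$ is what rules out $\cal{A}$ crafting a distinct $v$ with $H(v)=h^{*}$. Other users' keys and trapdoor queries are handled by honest generation and direct simulation.

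For the challenge, $\cal{B}$ draws $\bf{m}^{*}\hookleftarrow U(R_2)$, sets $\CT_1^{*}=c_u+\bf{m}^{*}\lfloor q/2\rfloor$, builds $\CT_3^{*}$ from $\bf{c}_{\bf{a}}$ and $\tilde T$ so that under real $\RLWE$ samples it matches in distribution $\bf{a}_{h^{*}}s_1+(\bf{y}^{T},\bf{z}^{T})^{T}$, and produces $(\CT_2^{*},\CT_4^{*})$ honestly as an encryption of $H'(\bf{m}^{*})$ using $\bf{b}$. In the real-$\RLWE$ branch the view is statistically close to the real OW-CCA game, so $\cal{A}$ outputs $\bf{m}^{*}$ with probability $\epsilon$; in the uniform branch $(\CT_1^{*},\CT_3^{*})$ is uniform and the adversary's only information about $\bf{m}^{*}$ flows through $H'(\bf{m}^{*})$, which $\cal{A}$ recovers from $(\CT_2^{*},\CT_4^{*})$ using $T_{\bf{b}}$, so producing $\bf{m}^{*}$ in this branch is exactly inverting $H'$ on a random image---bounded by $\epsilon_{H',\mathsf{OW}}$. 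A two-branch distinguishing argument then yields $\epsilon'\geq \epsilon-\epsilon_{H',\mathsf{OW}}-\epsilon_{H,\mathsf{CR}}$. The hard part, as always in such reductions, is the distribution-matching step for $\CT_3^{*}$: one must show that the noise $\bigl(\bf{e}^{T}\,\big|\,-\tilde T^{T}\bf{e}^{T}\bigr)^{T}$ inherited from the LWE sample, after rerandomisation by freshly chosen Gaussians, is statistically close to the honest $(\bf{y}^{T},\bf{z}^{T})^{T}$; this is a standard smoothing/leftover-hash calculation driven by the smallness of $\tilde T$, and it is exactly what pins down the parameter constraints of Section~\ref{sec:Params-ideal}.
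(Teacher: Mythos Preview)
Your proposal is correct and follows the same underlying strategy as the paper: puncture the $\bf{a}$-side trapdoor at the tag $h^{*}=H(v^{*})$ via the MP12/GM18 mechanism, keep a working trapdoor $T_{\bf{b}}$ on the $\bf{b}$-side to answer authorization queries, embed the $\RLWE$ challenge into $(\CT_1^{*},\CT_3^{*})$, and account for the losses $\epsilon_{H',\mathsf{OW}}$ and $\epsilon_{H,\mathsf{CR}}$.

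The only real difference is organizational. The paper structures the argument as a four-game hybrid: Game~1 first replaces $H'(M)$ in $\CT_{\theta,2}^{*}$ by $H'(M')$ for an independent $M'$ (this hop is charged to $\epsilon_{H',\mathsf{OW}}$), Game~2 switches the generation of $\bf{a}$ to the punctured form (charged to $\epsilon_{H,\mathsf{CR}}$ via the decryption-query analysis), and Game~3 invokes $\RLWE$ to randomize the challenge. You instead run a single two-branch distinguisher: in the real-$\RLWE$ branch the adversary plays the genuine $\OWa$ game, and in the uniform branch $(\CT_1^{*},\CT_3^{*})$ carry no information about $\bf{m}^{*}$, so the adversary's only handle is $H'(\bf{m}^{*})$ recovered from $(\CT_2^{*},\CT_4^{*})$, whence success is bounded by $\epsilon_{H',\mathsf{OW}}$. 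The two organizations yield the same inequality; yours makes the role of the one-wayness assumption somewhat more transparent than the paper's Game~1 step. A minor further difference is that the paper's reduction also feeds the $\RLWE$ second components into the construction of $\bf{b}$, which is unnecessary; your choice to generate $\bf{b}$ honestly is simpler and suffices because only $T_{\bf{b}}$, not any embedding in $\bf{b}$, is needed for the simulation.
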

\begin{proof}
The proof follows that of {\cite[Theorem 5]{Duong19}} and {\cite[Theorem 1]{BertFRS18-implement}}.
	Assume that there exists a Type-I adversary $\cal{A}$ who breaks the $\OWa$ security of the PKEET scheme with non-negligible probability $\epsilon$. We construct an algorithm $\cal{B}$ who solves the RLWE problem using $\cal{A}$. Assume again that there are $N$ users in our PKEET system. We now describe the behaviors of $\cal{B}$. Assume that $\theta$ is the target index of the adversary $\cal{A}$ and the challenge ciphertext is $\CT^*_\theta=(v^*,\CT^*_{\theta,1},\CT^*_{\theta,2},\CT^*_{\theta,3},\CT^*_{\theta,4})$.
	
	We will proceed the proof in a sequence of games. In Game $i$, let $W_i$ denote the event that the adversary $\cal{A}$ win the game. The adversary's advantage in Game $i$ is $\Pr[W_i]$.
	
\begin{description}
	\item[\textbf{Game 0}.]~~ This is the original $\OWa$ game between the attacker $\cal{A}$ against the scheme and the $\OWa$ challenger.
	
	\item[\textbf{Game 1}.]~~ This is similar to Game $0$ except that at
	the challenge phase, $\cal{B}$ chooses two message $M$ and $M'$ in the message space and encrypt $M$ in $\CT^*_{\theta,1}$ and $H'(M')$ in $\CT^*_{\theta,2}$. Other steps are similar to Game 0. Since $\cal{A}$ may have a trapdoor $\Td_{\a}$ (for $\a = 1, 2$) then he can obtain $H'(M')$. At the end, if $\cal{A}$ outputs $M'$, call this event $E_1$, then $\cal{A}$ has broken the one-wayness of $H'$. Thus $\Pr[E_1]\leq\epsilon_{H',\mathsf{OW}}$ where $\epsilon_{H',\mathsf{OW}}$ is the advantage of $\cal{A}$ in breaking the one-wayness of $H'$. Therefore, one has 
	$$\Pr[W_0]-\Pr[W_1]\leq \epsilon_{H',\mathsf{OW}}.$$
	
	\item[\textbf{Game 2}.]~~ This is similar to Game 1 except the way the challenge $\cal{B}$ generates the public key for the user with index $\theta$, and the challenge ciphertext $\CT^*_\t$ as the following. At the start of the experiment, choose a random $v^*\in\ZZ_q^n$ and let  the public parameter $\bf{a}$ generated by $\TrapGen(q,\sigma,\bf{a}',-h^*_{\theta})$ where $h^*_{\theta}=H(v^*)$. Hence, the public parameter is $\bf{a}=((\bf{a}')^T | -h^*_{\theta}\bf{g}-(\bf{a}')^TT_\bf{a})^T$, where the first part $\bf{a}'\in R_q^{m-k}$ is chosen from the uniform distribution. For the second part $\bf{a}'^TT_\bf{a}=(\sum_{i=1}^{m-k}a_it_{i,1},\cdots,\sum_{i=1}^{m-k}a_it_{i,k})$ is indistinguishable from the uniform distribution.
	In our paper, we choose $m-k=2$ and $\bf{a}'=(1,a)$ with $a\hookleftarrow U(R_q)$ and the public key $\bf{a}=(1,a|-(at_{2,1}+t_{1,1}),\cdots,-(at_{2,k}+t_{1,k}))$ looks uniform followed by the RLWE assumption, given that the secret and error follow the same distribution. The remainder of the game is unchanged and similar to Game 1. 
	
	Note that whenever $\cal{A}$ queries $\cal{O}^{\Dec}(\t,\CT_\t)$ with $\CT_\t=(v,\CT_{\t,1},\CT_{\t,2},\CT_{\t,3},\CT_{\t,4})$ then $\cal{B}$ does as follows. If $v=0$ or $v=v^*$ then $\cal{B}$ aborts. Otherwise, $\cal{B}$ can answer as usual using the trapdoor $T_\bf{a}$, except if $H(v)=h^*_\t$, which happens with probability at most the advantage $\epsilon_{H,\mathsf{CR}}$ of breaking the collision-resistance of $H$.
	It follows that $$\Pr[W_2]-\Pr[W_1]\leq \epsilon_{H,\mathsf{CR}}.$$
	
	\item[\textbf{Game 3}.]~~ In this game, the challenge ciphertext $\CT^*_\theta$ is now chosen uniformly in $R_q^{2m+3}$. We now show that Game 3 and Game 2 are indistinguishable for $\bf{A}$ by doing a reduction from RLWE problem. 
	\end{description}
	
	Now $\cal{B}$ receives $m-k+1$ samples $(a_i,b_i)_{0\leq i\leq m-k}$ as an instance of the decisional RLWE problem. Let $\bf{a}'=(a_1,\cdots,a_{m-k})^T\in R_q^{m-k}$ and $\bf{b}'=(b_1,\cdots,b_{m-k})^T\in R_q^{m-k}$. The simulator runs $\TrapGen(q,\sigma,\bf{a}',-h^*_\theta)$, and we get $\bf{a}=((\bf{a}')^T | -h^*_{\theta}\bf{g}-(\bf{a}')^TT_{\bf{a}})^T$ as in Game 2. Similarly, the simulator runs $\TrapGen(q,\sigma,\bf{b}',-h^*_\theta)$ to get $\bf{b}=((\bf{b}')^T | -h^*_{\theta}\bf{g}-(\bf{b}')^TT_{\bf{a}})^T$.  Next $\cal{B}$ set $u=a_0$ and sends $\PK_\theta=(\bf{a},\bf{b},u)$ to $\cal{A}$ as the public key of the user $\theta$. 
	
	At the challenge phase, the simulator chooses a message $M$ and computes the challenge ciphertext $\CT^*_\theta\gets\Enc(\PK_\theta,M)$ as follows:
	\begin{enumerate}
		\item Set $\CT^*_{\theta,1}\gets b_0 + M\cdot\lfloor q/2\rfloor.$
	\item Choose a uniformly random $s_2\in R_q$ and $e_2'\hookleftarrow D_{R,\tau}$ and compute
	$$\CT^*_{\theta,2} = u\cdot s_2 + e_2' + H'(M)\cdot\lfloor q/2\rfloor\in R_q.$$

		\item Set 
		$$\CT^*_{\t,3}=\left[ 
		\begin{array}{c}
		\bf{b}' \\
		-\bf{b}'T_\bf{a} +\widehat{\bf{e}}
		\end{array}
		 \right]\in R_q^m$$
	with $\widehat{\bf{e}}\hookleftarrow D_{R_q^k,\,u}$ for some real $\mu$.
		\item Compute $h^*_\t = H(v^*)\in R_q$.
	\item Choose $\bf{y}'\hookleftarrow D_{R^{m-k},\tau}$, $\bf{z}'\hookleftarrow R_{R^k,\gamma}$ and set
	$$\CT^*_{\t,4}=\bf{b}_{h_\t}\cdot s_2+((\bf{y}')^T,(\bf{z}')^T)^T\in R_q^m.$$
	\end{enumerate}		
Then $\cal{B}$ sends $\CT^*_\t=(v^*,\CT^*_{\t,1},\CT^*_{\t,2},\CT^*_{\t,3},\CT^*_{\t,4})$ to $\cal{A}$.

When the samples $(a_i,b_i)$ are LWE samples, then $\bf{b}'=\bf{a}'s_1+\bf{e}'$ and $b_0=a_0s_1+e_0$ for some $s_1\in R_q$ and $e_0\hookleftarrow D_{R,\tau}$, $\bf{e}'\hookleftarrow D_{R^{m-k},\tau}$. It implies that 
$$
\CT^*_{\t,1} = u\cdot s_1+e_0 + M\cdot\lfloor q/2\rfloor$$
and
$$\CT^*_{\t,3} = \bf{a}_{h_\t}\cdot s_1 + (\bf{e}'^T | \bf{z}^T) $$
where $\bf{z}=-\bf{e}'^TT_{\bf{a}}+\widehat{\bf{e}}^T$ is indistinguishable from a sample drawn from the distribution $D_{R^k,\gamma}$ with $\gamma^2=(\sigma\|\bf{e}'\|)^2+\mu^2$ for $\mu$ well chosen. 

Then $\CT^*_{\t}$ is a valid ciphertext.

When the $(a_i,b_i)$ are uniformly random in $R_q^2$, then obviously $\CT^*_{\t}$ also looks uniform.

$\cal{A}$ guesses if it is interacting with Game 3 or Game 2. The simulator outputs the final guess as the answer to the
RLWE problem. One can easily obtain that 
$$\Pr[W_3]-\Pr[W_2]\leq \epsilon'.$$
Combining the above results we obtain 
$$\epsilon = \Pr[W_0] \leq \epsilon_{H',\mathsf{OW}}+\epsilon_{H,\mathsf{CR}}+\epsilon'$$
which implies
$$\epsilon'\geq \epsilon-\epsilon_{H,\mathsf{CR}}-\epsilon_{H',\mathsf{OW}} .$$
\end{proof}

\begin{thm}[$\INDa$]\label{thm:INDCPA}
	The proposed PKEET scheme with parameter $(q,n,m,\sigma,\alpha)$ as in... is $\INDa$ secure provided that $H$ is a one-way hash function, $H'$ is a collision resistant hash function and the $\RLWE$ is hard. In particular, suppose there exists a probabilistic algorithm $\cal{A}$ that wins the $\INDa$ game with advantage $\epsilon$ then there is a probabilistic algorithm $\cal{B}$ that solves that $\RLWE$ problem with advantage $\epsilon'$ such that
	$$\epsilon'\geq \epsilon -\epsilon_{H,\mathsf{CR}},$$
	where $\epsilon_{H,\mathsf{CR}}$ is the advantage of breaking the collision resistance of $H$.
\end{thm}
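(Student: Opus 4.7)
The plan is to mirror the game-hopping structure of the proof of Theorem~\ref{thm:OWCPA}, adapted to the $\INDa$ setting against Type-II adversaries. Two differences distinguish this case from OW-CCA. First, because a Type-II adversary cannot ask $\Td$-queries on the target user $\theta$, there is no need for the $\epsilon_{H',\mathsf{OW}}$ hybrid of Theorem~\ref{thm:OWCPA}, so only a single $\epsilon_{H,\mathsf{CR}}$ term appears in the bound. Second, the IND experiment requires the honest challenge $\CT^*_\theta$ to be indistinguishable from a uniform element of $R_q^{2m+3}$, hence \emph{both} halves $(\CT^*_{\theta,1},\CT^*_{\theta,3})$ and $(\CT^*_{\theta,2},\CT^*_{\theta,4})$ must be randomized; this is folded into a single $\RLWE$ reduction using a sufficiently long sample vector.

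I would first define Game~0 as the real $\INDa$ experiment, and move to Game~1 by sampling $v^*\hookleftarrow U(\ZZ_q^n)$ at setup, setting $h^*_\theta = H(v^*)$, and generating the target user's vectors as $(\bf{a},T_\bf{a})\gets\TrapGen(q,\sigma,\bf{a}',-h^*_\theta)$ and $(\bf{b},T_\bf{b})\gets\TrapGen(q,\sigma,\bf{b}',-h^*_\theta)$, exactly as in Game~2 of Theorem~\ref{thm:OWCPA}. All secret-key and $\Td_\a$ queries with $i\neq\theta$ are served by running $\TrapGen$ honestly. A decryption query $\cal{O}^{\Dec}(\theta,(v,\ldots))$ is answered by computing $h=H(v)$ and, whenever $v\neq v^*$, invoking $\SamplePre$ with $T_\bf{a}$ and $T_\bf{b}$; the FRD property makes $H(v)-h^*_\theta$ invertible, so these trapdoors remain valid preimage-sampling trapdoors for $\bf{a}_h$ and $\bf{b}_h$. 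The only failure mode is that $\cal{A}$ submits some $v\neq v^*$ with $H(v)=h^*_\theta$, giving $|\Pr[W_1]-\Pr[W_0]|\leq\epsilon_{H,\mathsf{CR}}$.

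Game~2 replaces $\CT^*_\theta$ (in the $r=0$ branch) by a uniform element of $R_q^{2m+3}$, so that the two branches $r=0,1$ become identically distributed and $\Pr[W_2]=1/2$. To reduce this transition to decisional $\RLWE$, $\cal{B}$ parses its $\RLWE$ challenge into two aligned sample blocks $(a_i,b_i)_{0\leq i\leq m-k}$ and $(a_i',b_i')_{0\leq i\leq m-k}$ sharing the first coordinate $u=a_0=a_0'$; the first block embeds into $\bf{a}'$ and randomizes $(\CT^*_{\theta,1},\CT^*_{\theta,3})$ via $b_0,\bf{b}'$ exactly as in Game~3 of Theorem~\ref{thm:OWCPA} (with the $k$ bottom coordinates smoothed by $\widehat{\bf{e}}\hookleftarrow D_{R^k,\mu}$), and the second is used symmetrically for $\bf{b}'$ and $(\CT^*_{\theta,2},\CT^*_{\theta,4})$, now carrying $H'(\bf{m})$. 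Real samples yield Game~1; uniform samples yield Game~2. Combining the two hops gives $\epsilon\leq\epsilon_{H,\mathsf{CR}}+\epsilon'$, i.e., $\epsilon'\geq\epsilon-\epsilon_{H,\mathsf{CR}}$.

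The main obstacle I expect is engineering this joint embedding so that both halves share the same $u$ out of a single $\RLWE$ instance: either one aligns the two blocks through a rerandomization trick that forces $a_0=a_0'$, or one performs two consecutive $\RLWE$ hybrids at the price of a factor of two in the loss; the stated bound corresponds to the first option. Apart from this, the argument is routine: decryption queries at indices $i\neq\theta$ are trivially handled via the honest trapdoors, and no property of $H'$ beyond being a fixed deterministic function is needed, since a Type-II adversary never obtains $H'(\bf{m})$ from a test oracle.
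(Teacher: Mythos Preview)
Your proposal is correct and follows the paper's own approach. The paper's proof is a two-line sketch: it defers to Theorem~\ref{thm:OWCPA} and {\cite[Theorem~6]{Duong19}}, noting only that Game~1 of the OW-CCA argument is dropped (hence no $\epsilon_{H',\mathsf{OW}}$ term). Your game sequence---real game, then switch the target user's public vectors to carry the tag $-h^*_\theta$ (losing $\epsilon_{H,\mathsf{CR}}$), then replace the challenge by uniform via $\RLWE$---is exactly this, and your explanation of \emph{why} Game~1 can be dropped (a Type-II adversary has no $\Td$ access on $\theta$) is the intended one.

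Your extra care about randomizing \emph{both} halves $(\CT^*_{\theta,1},\CT^*_{\theta,3})$ and $(\CT^*_{\theta,2},\CT^*_{\theta,4})$ is a point the paper's sketch glosses over. In the OW-CCA reduction only the $\bf{a}$-half is embedded into the $\RLWE$ challenge, because after their Game~1 the $\bf{b}$-half carries $H'(M')$ for an independent $M'$ and is neutralized by one-wayness. Without that game, the $\bf{b}$-half still encodes $H'(\bf{m})$ for the adversary's chosen $\bf{m}$, so a second $\RLWE$ hop (or your joint embedding) is what actually makes the full ciphertext pseudorandom. The two-hybrid route with a factor-of-two loss is the cleaner option; the paper's stated bound simply does not track that constant, and your remark that forcing $a_0=a_0'$ in a single instance is the delicate step is accurate.
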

\begin{proof}
	The proof is similar to that of Theorem~\ref{thm:OWCPA} and that of~{\cite[Theorem 6]{Duong19}}, hence we omit the proof here. Note that in this proof, we do not consider Game 1 as in Theorem~\ref{thm:OWCPA} , which results in not having $\epsilon_{H',\mathsf{OW}}$ in the advantage formula.
\end{proof}
\subsubsection*{Parameters}\label{sec:Params-ideal}
We follow~{\cite[Section 4.2]{BertFRS18-implement}} for choosing parameters for our scheme as the following:
\begin{enumerate}
	\item The modulus $q$ is choosen to be a prime of size $62$ bits.
	\item We choose $m-k=2$.
	\item The Gaussian parameter $\sigma$ for the trapdoor sampling is $\sigma>\sqrt{(\ln(2n/\epsilon)/\pi)}$ (\cite{MP12}) where $n$ is the maximum length of the ring polynomials, and $\epsilon$ is the desired bound on the statistical error introduced by each randomized rounding operation. This parameter is also chosen to ensure the hardness of $\RLWE$ problem.
	\item The Gaussian parameter $\sigma$ for the $G$-sampling is $\alpha = \sqrt{5}\sigma$ (\cite{MP12}).
	\item The parameter $\zeta$ is chosen such that $\zeta>\sqrt{5}C\sigma^2(\sqrt{kn}+\sqrt{2n}+t')$ for $C\cong 1/\sqrt{2\pi}$ and $t'\geq 0$, following~\cite{GM18}.
	\item For decrypting correctly, we need
$$
	t\tau\sqrt{n}+2t^2\tau\zeta n+t^2\gamma\zeta k n<\lfloor q/4\rfloor.
$$
\item Finally, we choose $\mu=t\sigma\tau\sqrt{2n}$ and $\gamma=2t\sigma\tau\sqrt{n}$ so that $\gamma$ satisfies $\gamma^2=(\sigma\|\bf{e}'\|)^2+\mu^2$.

\item The parameter $t$ here is chosen such that a vector $\bf{x}$ sampled in $D_{\ZZ^m,\sigma}$ has norm $\|\bf{x}\|\leq t\sigma\sqrt{m}$. Note that
$$\Pr_{x\hookleftarrow D_{\ZZ,\sigma}}[|x|>t\sigma]\leq\mathrm{erfc}(t/\sqrt{2})$$
with $\mathrm{erfc}(x)=1-\frac{2}{\pi}\int_{0}^x\exp^{-t^2}dt$. One can choose, for example, $t=12$ (see {\cite[Section 2]{BertFRS18-implement}}).
\end{enumerate}

\section{Discussion}
PKEET-FA over integer lattices has lowest ciphertext and secret key sizes. But, public key sizes is lowest in case of PKEET-FA over ideal lattices. We have provided a comparative study of data sizes among the proposed schemes in table \ref{tab2}.

	{\renewcommand{\arraystretch}{1.2}
		\setlength{\tabcolsep}{5pt}
		\begin{table*}[ht]
			\footnotesize
			\begin{center}
				\caption{Comparison among Proposed PKEET-FA.}
				\begin{tabular}{c c c c }
					\toprule
					Scheme & Ciphertext & Public Key & Secret Key  \\ 
					\midrule
					
					Section \ref{sec:PKEET-FA} & $l + (2t + 4m)\log q$ &  $((l+3)mn+nt)\log q$ & $2m^2\log q$ \\
					Section \ref{sec:construct Lee} &  $(8m+2t +2mt) \log q$ &  $(4mn+nt)\log q$ & $2m^2\log q$ \\
					Section \ref{sec:construction-ideal} & $n(2m+3)\log q$ & $n(2m+1)\log q$  & $2nk(m-k)\log q$ \\
					\toprule
					
					\multicolumn{4}{l}{ \parbox[t]{0.5\textwidth}{Data sizes are in number of bits. }} \\
				\end{tabular}
				\label{tab2}
			\end{center}
			
		\end{table*}
	}
	\normalsize

\section{Implementation of PKEET-FA over Ideal Lattices}
We discuss here a small test implementation we created. The following code hosted at \url{https:\\github.com\TBD}
The purpose of this implementation is 
to serve as a baseline for further efficiency improvements. 

\begin{table}[!h]
	\centering
\begin{tabular}{|c|| c | c | c | c | c | c | c | c | c |}
	\hline
	Test & Setup & Encrypt & Decrypt & $Td_1$ & $Td_2$ & $Td_{3,(i,j)}$ & $Test_1$ & $Test_2$ & $Test_3$\\
	\hline
	Time (ms) & 4.644 & 7.772 & 38.618 & 0,0001 & 37.715 & 18.647 & 37.776 & 0.8203 & 19.648\\
	\hline
    \end{tabular}
    
    \vspace*{0.1in}
	\caption{Test results, average time in ms after 1000 runs, security $\lambda = 195$}
	\label{table::test_results}
\end{table}

While the whole program architecture is different, we attempted to keep all the primitives identical from sampling to memory allocations.
The results of the computations are shown in the table \ref{table::test_results}.
The computations were done on a \textit{Intel(R) Core(TM) i7-8665U CPU @ 1.90GHz} processor using the \textit{Windows Subsystem for Linux}.
Note that our timing tests were done sequentially, without the use of multithreading to run test samples as it was the case in \url{https://github.com/lbibe/code/blob/master/src/main.cpp}. Some primitives use multithreading whenever it was also used on the available code of lbibe. Note that the thread number was set to 2. The results presented in table \ref{table::test_results} are consistent according to the design of the scheme. 

We do not provide any comparison in this paper: most comparisons available online and used in the current literature are NIST candidates, which follow specific requirements and have various degrees of optimizations that are, in our honest opinion, not consistent between schemes. 
As far as we know, no NIST submissions have an equality test implemented.


\bibliographystyle{splncs04}
\bibliography{latbib}

	\end{document}